  \providecommand\BibTeX{{%
    \normalfont B\kern-0.5em{\scshape i\kern-0.25em b}\kern-0.8em\TeX}}}
\newtheorem{fact}{Fact}[section]
\newtheorem{claim}{Claim}[section]
\newtheorem{remark}{Remark}[section]
\newenvironment{proofof}[1]{{\vspace*{5pt} \noindent\bf Proof of #1:  }}{\hfill\rule{2mm}{2mm}\vspace*{5pt}}
\newcommand{\obm}{\textsf{Online Bipartite Matching}\xspace}
\newcommand{\vwobm}{\textsf{Online Vertex-Weighted Bipartite Matching}\xspace}
\newcommand{\ranking}{\textsf{Ranking}\xspace}
\newcommand{\expect}[2]{\operatorname{\mathbf E}_{#1}\left[#2\right]}
\newcommand{\vect}[1]{\ensuremath{\vec{#1}}}
\newcommand{\vecy}{\vect{y}}
\newcommand{\onev}{\ensuremath{\mathbbm{1}}}
\begin{document}

\title{Online Vertex-Weighted Bipartite Matching: Beating $1-\frac{1}{e}$ with Random Arrivals}
\titlenote{A preliminary version of this paper appeared in ICALP 2018~\cite{icalp/HTWZ18}.
	This work is partially supported by grants from Hong Kong RGC under
	the contract 17202715 and HKU17202115E.}

\author{Zhiyi Huang}
\affiliation{%
	\institution{Department of Computer Science, The University of Hong Kong}
}
\email{zhiyi@cs.hku.hk}

\author{Zhihao Gavin Tang}
\authornote{Part of the work was done when the author was a PhD student at the University of Hong Kong.}
\affiliation{%
	\institution{ITCS, Shanghai University of Finance and Economics, China}
}
\email{tang.zhihao@mail.shufe.edu.cn}

\author{Xiaowei Wu}
\authornote{Part of the work was done when the author was a postdoc at the University of Hong Kong. The research leading to these results has received funding from the European Research Council under the European Community's Seventh Framework Programme (FP7/2007-2013) / ERC grant agreement No. 340506.}
\affiliation{%
	\institution{Faculty of Computer Science, University of Vienna, Austria}
}
\email{wxw0711@gmail.com}

\author{Yuhao Zhang}
\affiliation{%
	\institution{Department of Computer Science, The University of Hong Kong, China}
}
\email{yhzhang2@cs.hku.hk}

\renewcommand{\shortauthors}{Z. Huang, Z. Tang, X. Wu, and Y. Zhang}

\begin{abstract}
	We introduce a weighted version of the ranking algorithm by Karp et al. (STOC 1990), and prove a competitive ratio of 0.6534 for the vertex-weighted online bipartite matching problem when online vertices arrive in random order. Our result shows that random arrivals help beating the 1-1/e barrier even in the vertex-weighted case.
	We build on the randomized primal-dual framework by Devanur et al. (SODA 2013) and design a two dimensional gain sharing function, which depends not only on the rank of the offline vertex, but also on the arrival time of the online vertex. To our knowledge, this is the first competitive ratio strictly larger than 1-1/e for an online bipartite matching problem achieved under the randomized primal-dual framework. Our algorithm has a natural interpretation that offline vertices offer a larger portion of their weights to the online vertices as time increase, and each online vertex matches the neighbor with the highest offer at its arrival. 
\end{abstract}

\begin{CCSXML}
	<ccs2012>
	<concept>
	<concept_id>10003752.10003809.10003636</concept_id>
	<concept_desc>Theory of computation~Approximation algorithms analysis</concept_desc>
	<concept_significance>500</concept_significance>
	</concept>
	<concept>
	<concept_id>10003752.10003809.10010047</concept_id>
	<concept_desc>Theory of computation~Online algorithms</concept_desc>
	<concept_significance>500</concept_significance>
	</concept>
	</ccs2012>
\end{CCSXML}

\ccsdesc[500]{Theory of computation~Approximation algorithms analysis}
\ccsdesc[500]{Theory of computation~Online algorithms}

\keywords{Vertex Weighted, Online Bipartite Matching, Randomized Primal-Dual}

\maketitle

\section{Introduction}

With a wide range of applications, \obm and its variants are a focal point in the online algorithms literature.
Consider a bipartite graph $G(U, V, E)$ on vertices $U\cup V$, where the set $V$ of offline vertices is known in advance and vertices in $U$ arrive online.
On the arrival of an online vertex, its incident edges are revealed and the algorithm must irrevocably either match it to one of its unmatched neighbors or leave it unmatched.
In a seminal paper, Karp et al.~\cite{stoc/KarpVV90} proposed the \ranking algorithm, which picks at the beginning a random permutation over the offline vertices $V$, and matches each online vertex to the first unmatched neighbor according to the permutation.
They proved a tight competitive ratio $1-\frac{1}{e}$ of \ranking, when online vertices arrive in an arbitrary order.
The analysis has been simplified in a series of subsequent works~\cite{soda/GoelM08,sigact/BenjaminC08,soda/DevanurJK13}.
Further, the \ranking algorithm has been extended to other variants of the \obm problem, including the vertex-weighted case~\cite{soda/AggarwalGKM11}, the random arrival model~\cite{stoc/KarandeMT11,stoc/MahdianY11}, and the Adwords problem~\cite{jacm/MehtaSVV07,esa/BuchbinderJN07,stoc/DevanurJ12}.

As a natural generalization, \vwobm was considered by Aggarwal et al.~\cite{soda/AggarwalGKM11}.
In this problem, each offline vertex $v\in V$ has a non-negative weight $w_v$, and the objective is to maximize the total weight of the matched offline vertices.
A weighted version of the \ranking algorithm was proposed in~\cite{soda/AggarwalGKM11} and shown to be $(1-\frac{1}{e})$-competitive, matching the problem hardness in the unweighted version.
They fix a non-increasing perturbation function $\psi: [0,1]\rightarrow [0,1]$, and draw a rank $y_v\in[0,1]$ uniformly and independently for each offline vertex $v\in V$.
The offline vertices are then sorted in decreasing order of the \emph{perturbed weight} $w_v\cdot \psi(y_v)$.
Each online vertex matches the first unmatched neighbor on the list upon its arrival.
It is shown that by choosing the perturbation function $\psi(y) := 1-e^{y-1}$, the weighted \ranking algorithm achieves a tight competitive ratio $1-\frac{1}{e}$.
In a subsequent work, Devanur et al.~\cite{soda/DevanurJK13} simplified the analysis under the randomized primal-dual framework and gave an alternative interpretation of the algorithm: each offline vertex $v$ makes an offer of value $w_v\cdot (1-g(y_v))$ as long as it is not matched, where $g(y) := e^{y-1} = 1-\psi(y)$, and each online vertex matches the neighbor that offers the highest.

Motivated by the practical importance of \obm and its applications for online advertisements, another line of research seeks for a better theoretical bound beyond the worst-case hardness result provided by Karp et al.~\cite{stoc/KarpVV90}.
\obm problem with random arrivals was considered independently by Karande et al.~\cite{stoc/KarandeMT11} and Mahdian et al.~\cite{stoc/MahdianY11}.
They both studied the performance of \ranking assuming that online vertices arrive in a uniform random order and proved competitive ratios $0.653$ and $0.696$ respectively.
On the negative side, Karande et al.~\cite{stoc/KarandeMT11} explicitly constructed an instance for which \ranking performs no better than $0.727$, which is later improved to $0.724$ by Chan et al.~\cite{soda/ChanCWZ14}.
In terms of problem hardness, Manshadi et al.~\cite{mor/ManshadiGS12} showed that no algorithm can achieve a competitive ratio larger than $0.823$.

The natural next step is then to consider the \vwobm problem with random arrivals.
\emph{Do random arrivals help beating $1-\frac{1}{e}$ even in the vertex-weighted case?}

\begin{table}[h]
	\centering
	\begin{tabular}{c|c|c}
		& Arbitrary Arrivals & Random Arrivals \\
		\hline
		Unweighted & $1-\frac{1}{e} \approx 0.632$~\cite{stoc/KarpVV90, sigact/BenjaminC08, soda/DevanurJK13, soda/GoelM08} & $0.696$~\cite{stoc/MahdianY11} \\
		\hline
		Vertex-weighted & $1-\frac{1}{e} \approx 0.632$~\cite{soda/AggarwalGKM11, soda/DevanurJK13} & \textbf{$\bf 0.6534$ (this paper)}
	\end{tabular}
\end{table}

\subsection{Our Results and Techniques}

We answer this affirmatively by showing that a generalized version of the \ranking algorithm achieves a competitive ratio $0.6534$.

\begin{theorem}
	There exists a $0.6534$-competitive algorithm for the \vwobm problem with random arrivals.
\end{theorem}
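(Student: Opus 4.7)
The plan is to instantiate the randomized primal-dual framework of Devanur et al.\ with a gain-sharing function $g(y,t)$ that depends on both the rank $y \in [0,1]$ of the offline vertex and the (normalized) arrival time $t \in [0,1]$ of the online vertex it is matched to, then to solve for the function $g$ that maximizes the competitive ratio $\Gamma$ subject to a dual-feasibility integral inequality.

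First I would define the algorithm. Each offline vertex $v$ draws an independent rank $y_v \sim \text{Uniform}[0,1]$; in the random arrival model each online vertex $u$ is assigned, equivalently, a uniform time $t_u \in [0,1]$. When $u$ arrives at time $t_u$, each currently unmatched neighbor $v$ makes an offer $w_v \cdot \bigl(1 - g(y_v, t_u)\bigr)$, and $u$ matches the neighbor with the highest offer. For the primal-dual accounting, whenever $(u,v)$ is matched I would set $\alpha_u = w_v \cdot (1 - g(y_v, t_u))$ and $\beta_v = w_v \cdot g(y_v, t_u)$, so that $\alpha_u + \beta_v = w_v$ equals the contribution of $v$ to the algorithm's objective.

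The core of the proof is the per-edge dual-feasibility inequality: for every edge $(u,v) \in E$,
\[
\mathbf{E}\bigl[\alpha_u + \beta_v\bigr] \;\ge\; \Gamma \cdot w_v,
\]
where the expectation is over all ranks and arrival times. The $\beta$ side is direct, since $\beta_v = w_v \cdot g(y_v, t_{m(v)})$ on the event that $v$ is matched. The $\alpha$ side requires a structural monotonicity lemma in the style of the arbitrary-arrival weighted analysis: fixing all other randomness, if one hypothetically removes $v$ from the instance, then $u$ would match some neighbor providing offer value $\theta(u)$ at time $t_u$, and reinstating $v$ with rank $y_v$ causes $u$ to switch to $v$ exactly when $v$'s offer exceeds $\theta(u)$, i.e.\ when $y_v$ lies below a certain critical threshold $y_c$. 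Integrating the resulting bound against the joint density of $(y_v, t_u)$ over $[0,1]^2$ produces a functional inequality that $g$ must satisfy.

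The main obstacle will be twofold. First, unlike in the arbitrary-arrival analysis, replacing or removing $v$ in the random-arrival model can cascade through a chain of re-matches rather than causing a single swap, so the standard one-step exchange argument does not apply verbatim; I expect to need an averaging argument over arrival orders (exploiting the fact that a uniformly random permutation conditioned on $u$ arriving at time $t_u$ still yields a uniform ordering of the remaining vertices) to reduce the cascade to a tractable per-edge inequality in the two variables $(y_v, t_u)$. Second, once the constraint is written as an integral inequality of the form
\[
\int_0^1\!\!\int_0^1 (1 - g(y,t))\, h(y,t)\, dy\, dt \;+\; (\text{contribution from } \beta) \;\ge\; \Gamma,
\]
finding the two-dimensional $g$ that equalizes the constraint and maximizes $\Gamma$ is a calculus-of-variations problem whose Euler--Lagrange conditions give a PDE for $g(y,t)$ with boundary data along $y=1$ and $t=0$; I do not expect a closed form, so I would solve it numerically, certify the solution by checking the pointwise feasibility of the resulting $g$, and read off $\Gamma \ge 0.6534$ from the value at which the constraint binds.
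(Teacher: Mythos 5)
Your high-level plan — a two-dimensional gain-sharing function inside the randomized primal-dual framework of Devanur et al.\ — is exactly the paper's starting point, and your dual assignment $\alpha_u = w_v(1-g(y_v,t_u))$, $\beta_v = w_v\,g(y_v,t_u)$ matches the paper's. The key per-edge inequality $\expect{}{\alpha_u+\beta_v}\ge \Gamma\cdot w_v$ over the randomness of $(y_v,t_u)$ is also the right target.

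However, there are genuine gaps between your sketch and a proof. The first is in the structural lemma. You posit a single critical threshold $y_c$ for $y_v$ (at a fixed $t_u$). The paper in fact needs a \emph{two}-threshold structure: for each arrival time $y_u$ there are thresholds $\beta(y_u) \le \theta(y_u)$ such that $v$ is already matched (to someone else) when $y_v < \beta(y_u)$, matched to $u$ when $\beta(y_u)<y_v<\theta(y_u)$, and unmatched after $u$'s arrival when $y_v>\theta(y_u)$. Tracking $\beta$ is essential because in the regime $y_v<\beta(y_u)$ you need a lower bound on $\alpha_v$ coming from $v$'s \emph{other} match, not from $u$. More critically, when you integrate over $t_u$ you must confront the fact (explicitly flagged in the paper) that the threshold $\theta(\cdot)$ is \emph{not} monotone in $t_u$ for vertex-weighted instances, because the two-dimensional $g$ can flip $u$'s preference between two neighbors as $t_u$ changes. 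Your sketch does not anticipate this, and it is the central technical obstacle the paper overcomes: it proves just enough partial monotonicity (of $\beta$, and of the set where $\theta=1$) to reduce the worst case to step-function thresholds governed by two scalar parameters $\tau,\gamma$, and then minimizes an explicit expression over $\tau,\gamma\in[0,1]$.

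The second gap is in the ``optimize $g$'' step. You propose treating the problem as a calculus-of-variations / Euler--Lagrange PDE for an arbitrary $g(y,t)$ and solving numerically. This does not yield a proof: the integral constraint you would write down is not a single inequality but a family indexed by the unknown instance-dependent threshold functions, so the variational problem is a min-max that you cannot pose cleanly without first characterizing the worst-case thresholds, which is exactly the missing lemma above. The paper sidesteps this entirely by restricting to the symmetric separable form $g(x,y)=\tfrac12(h(x)+1-h(y))$ with a non-decreasing $h$ satisfying $h'\le h$, proving the worst-case reduction to step thresholds \emph{for this class}, and then exhibiting the explicit choice $h(x)=\min\{1,\tfrac12 e^x\}$ together with an analytic computation showing the minimum of the resulting two-variable expression equals $1-\tfrac{\ln 2}{2}\approx 0.6534$. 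A numerically-found $g$ with a ``pointwise feasibility check'' would not establish the competitive ratio, because feasibility is not a pointwise condition on $g$ — it is an expectation bound over all instances. You need the closed-form $h$ and the analytic minimization over $\tau,\gamma$ (and over the inner threshold parameter in the refined bound) to actually conclude.
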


Interestingly, we do not obtain our result by generalizing existing works that break the $1-\frac{1}{e}$ barrier on the unweighted case~\cite{stoc/KarandeMT11,stoc/MahdianY11} to the vertex-weighted case. Instead, we take a totally different path, and build our analysis on the randomized primal-dual technique introduced by Devanur et al.~\cite{soda/DevanurJK13}, which was used to provide a more unified analysis of the algorithms for the \obm problem with arbitrary arrival order and its extensions.

We first briefly review the proof of Devanur et al.~\cite{soda/DevanurJK13}. The randomized primal-dual technique can be viewed as a charging argument for sharing the gain of each matched edge between its two endpoints.
Recall that in the algorithm of~\cite{soda/AggarwalGKM11,soda/DevanurJK13}, each unmatched offline vertex offers a value of $w_v\cdot(1-g(y_v))$ to online vertices, and each online vertex matches the neighbor that offers the highest at its arrival.
Whenever an edge $(u, v)$ is added to the matching, where $v\in V$ is an offline vertex and $u\in U$ is an online vertex, imagine a total gain of $w_v$ being shared between $u$ and $v$ such that $u$ gets $w_v\cdot(1-g(y_v))$ and $v$ gets $w_v\cdot g(y_v)$.
Since $g$ is non-decreasing, the smaller the rank of $v$, the smaller share it gets.
For any edge $(u,v)$ and any fixed ranks of online vertices other than $v$, they showed that by fixing $g(y) = e^{y-1}$, the expected gains of $u$ and $v$ (from all of their incident edges) combined is at least $(1-\frac{1}{e})\cdot w_v$ over the randomness of $y_v$.
This implies the $1-\frac{1}{e}$ competitive ratio.

Now we consider the problem with random arrivals.

Analogous to the offline vertices, as the online vertices arrive in random order, in the gain sharing process,
it is natural to give an online vertex $u$ a smaller share if $u$ arrives early (as it is more likely be get matched), and a larger share when $u$ arrives late.
Thus we consider the following version of the weighted \ranking algorithm.

Let $y_u$ be the arrival time of online vertex $u\in U$, which is chosen uniformly at random from $[0,1]$.
Analogous to the ranks of the offline vertices, we also call $y_u$ the rank of $u\in U$. 
Fix a function $g: [0,1]^2\rightarrow [0,1]$ that is non-decreasing in the first dimension and non-increasing in the second dimension.
On the arrival of $u\in U$, each unmatched neighbor $v\in V$ of $u$ makes an offer of value $w_v\cdot (1-g(y_v,y_u))$, and $u$ matches the neighbor with the highest offer.
This algorithm straightforwardly leads to a gain sharing rule for dual assignments: whenever $u\in U$ matches $v\in V$, let the gain of $u$ be $w_v\cdot(1-g(y_v,y_u))$ and the gain of $v$ be $w_v\cdot g(y_v,y_u)$.
It suffices to show that, for an appropriate function $g$, the expected gain of $u$ and $v$ combined is at least $0.6534 \cdot w_v$ over the randomness of both $y_u$ and $y_v$.

The main difficulty of the analysis is to give a good characterization of the behavior of the algorithm when we vary the ranks of both $u\in U$ and $v\in V$, while fixing the ranks of all other vertices arbitrarily.
The previous analysis for the unweighted case with random arrivals~\cite{stoc/KarandeMT11,stoc/MahdianY11} heavily relies on a symmetry between the random ranks of offline vertices and online vertices: Properties developed for the offline vertices in previous work directly translate to their online counterparts.
Unfortunately, the online and offline sides are no longer symmetric in the vertex-weighted case.
In particular, for the offline vertex $v$, an important property is that for any given rank $y_u$ of the online vertex $u$, we can define a unique marginal rank $\theta$ such that $v$ will be matched if and only if its rank $y_v< \theta$.
However, it is not possible to define such a marginal rank for the online vertex $u$ in the vertex-weighted case: As its arrival time changes, its matching status may change back and forth.
In particular, since the function $g$ depends on the arrival time of $u$, it may happen that $u$ prefers neighbor $v$ to $z$ at one arrival time, but prefers $z$ to $v$ at another.
The most important technical ingredient of our analysis is an appropriate lower bound on the expected gain which allows us to partially characterize the worst-case scenario (in the sense of minimizing the lower bound on the expected gain).
Further, the worst-case scenario does admit simple marginal ranks even for the online vertex $u$.
This allows us to design a symmetric gain sharing function $g$ and complete the competitive analysis of $0.6534$.

As we will discuss in Section~\ref{sec:conclusion}, our framework may be able to give stronger lower bound on the competitive ratio, potentially matching or even improving the one of Mahdian and Yan~\cite{stoc/MahdianY11}, if we had a tight analysis of a complex system of differential inequalities.
Numerical results suggest that the integration shown in Section~\ref{sec:conclusion} may give a much larger lower bound on the competitive ratio than the one we present in this paper.
However, giving a tight analysis on the integration is highly non-trivial.
Indeed, a significant portion of our analysis, e.g., Section~\ref{sec:improved} and part of Section~\ref{sec:simple}, is devoted to provide analyzable relaxations on this integration.

\subsection{Other Related Works}

There is a vast literature on problems related to \obm. For space reasons, we only list some of the most related here.

Kesselheim et al.~\cite{esa/KesselheimRTV13} considered the edge-weighted \obm problem with random arrivals, and proposed a $\frac{1}{e}$-competitive algorithm.
The competitive ratio is tight as it matches the lower bound on the classical secretary problem~\cite{mor/BuchbinderJS14}.
Wang and Wong~\cite{icalp/WangW15} considered a different model of \obm problem with both sides of vertices arriving online (in an arbitrary order):
A vertex can only actively match other vertices at its arrival; if it fails to match at its arrival, it may still get matched passively by other vertices later.
They showed a $0.526$-competitive algorithm for a fractional version of the problem.

Recently, Cohen and Wajc~\cite{soda/CohenW18} considered the \obm (with arbitrary arrival order) on regular graphs, and provided a $(1-O(\sqrt{\log d/d}))$-competitive algorithm, where $d$ is the degree of vertices.
Very recently, Huang et al.~\cite{stoc/HKTWZZ18} proposed a fully online matching model, in which all vertices of the graph arrive online (in an arbitrary order).
Extending the randomized primal-dual technique, they obtained competitive ratios above $0.5$ for both bipartite graphs and general graphs.

Similar but different from the \obm problem with random arrivals, in the stochastic \obm, the online vertices arrive according to some known probability distribution (with repetition).
Competitive ratios breaking the $1-\frac{1}{e}$ barrier have been achieved for the unweighted case~\cite{focs/FeldmanMMM09,esa/BahmaniK10,esa/BrubachSSX16} and the vertex-weighted case~\cite{wine/HaeuplerMZ11,mor/JailletL14,esa/BrubachSSX16}.

The \obm problem with random arrivals is closely related to the oblivious matching problem~\cite{rsa/Aronson1995,soda/ChanCWZ14,esa/AbolhassaniCCEH16} (on bipartite graphs).
It can be easily shown that \ranking has equivalent performance on the two problems.
Thus competitive ratios above $1-\frac{1}{e}$~\cite{stoc/KarandeMT11,stoc/MahdianY11} directly translate to the oblivious matching problem.
Generalizations of the problem to arbitrary graphs have also been considered, and competitive ratios above half are achieved for the unweighted case~\cite{rsa/Aronson1995,soda/ChanCWZ14} and vertex-weighted case~\cite{esa/AbolhassaniCCEH16,talg/ChanCW18}.

\section{Preliminaries}

We consider the \vwobm with random arrival order. Let $G(U, V,E)$ be the underlying graph, where vertices in $V$ are given in advance and vertices in $U$ arrive online in random order.
Each offline vertex $v\in V$ is associated with a non-negative weight $w_v$.
Without loss of generality, we assume the arrival time $y_u$ of each online vertex $u \in U$ is drawn independently and uniformly from $[0,1]$.
Mahdian and Yan~\cite{stoc/MahdianY11} use another interpretation for the random arrival model. They denote the order of arrival of online vertices by a permutation $\pi$ and assume that $\pi$ is drawn uniformly at random from the permutation group $S_n$. It is easy to see the equivalence between two interpretations:
The algorithm draws $n$ independent random variables from $[0,1]$ uniformly at random before any online vertex arrives, and assigns the $i$-th smallest variable to the $i$-th online vertex in the random permutation as its arrival time.

\paragraph{Weighted \ranking}
Fix a function $g: [0,1]^2\rightarrow [0,1]$ such that $\frac{\partial g(x,y)}{\partial x} \geq 0$ and $\frac{\partial g(x,y)}{\partial y} \leq 0$.
Each offline vertex $v\in V$ draws independently a random rank $y_v\in[0,1]$ uniformly at random. 
Upon the arrival of online vertex $u\in U$, $u$ is matched to its unmatched neighbor $v$ with maximum $w_v\cdot(1-g(y_v,y_u))$.

\begin{remark}
	In the adversarial model, Aggarwal et al.'s algorithm~\cite{soda/AggarwalGKM11} can be interpreted as choosing $g(y_v, y_u) := e^{y_v-1}$ in our algorithm.
	Our algorithm is a direct generalization of theirs to the random arrival model.
\end{remark}

For simplicity, for each $u\in U$, we also call its arrival time $y_u$ the rank of $u$. We use $\vecy : U\cup V \rightarrow [0,1]$ to denote the vector of all ranks.

Consider the linear program relaxation of the bipartite matching problem and its dual.
\begin{align*}
	\max: \quad & \textstyle \sum_{(u,v)\in E} w_v\cdot x_{uv} && \qquad\qquad & \min: \quad & \textstyle\sum_{u \in U} \alpha_u + \sum_{v \in V} \alpha_v\\
	\text{s.t.} \quad & \textstyle \sum_{v:(u,v)\in E} x_{uv} \leq 1 && \forall u\in U & \text{s.t.} \quad & \alpha_u + \alpha_v \geq w_v && \forall (u,v)\in E \\
	& \textstyle \sum_{u:(u,v)\in E} x_{uv} \leq 1 && \forall v\in V & & \alpha_u \geq 0 && \forall u \in U \\
	& x_{uv} \geq 0 && \forall (u,v)\in E & & \alpha_v \geq 0 && \forall v \in V
\end{align*}

\paragraph{Randomized Primal-Dual}
Our analysis builds on the randomized primal-dual technique by Devanur et al.~\cite{soda/DevanurJK13}.
We set the primal variables according to the matching produced by \ranking, i.e. $x_{uv}=1$ if and only if $u$ is matched to $v$ by \ranking, and set the dual variables so that the dual objective equals the primal.
In particular, we split the gain $w_v$ of each matched edge $(u,v)$ between vertices $u$ and $v$; the dual variable for each vertex then equals the share it gets.
Given primal feasibility and equal objectives, the usual primal-dual techniques would further seek to show approximate dual feasibility, namely, $\alpha_u + \alpha_v \ge F\cdot w_v$ for every edge $(u, v)$, where $F$ is the target competitive ratio.
Observe that the above primal and dual assignments are themselves random variables.
Devanur et al.~\cite{soda/DevanurJK13} claimed that the primal-dual argument goes through given approximate dual feasibility in expectation. We formulate this insight in the following lemma and include a proof for completeness.

\begin{lemma}\label{lemma:dual_fitting}
	\ranking is $F$-competitive if we can set (non-negative) dual variables such that 
	\begin{compactitem}
		\item $\sum_{(u,v)\in E} x_{uv} = \sum_{u \in V} \alpha_u$; and
		\item $\expect{\vecy}{\alpha_u+\alpha_v} \geq F\cdot w_v$ for all $(u,v)\in E$.
	\end{compactitem}
\end{lemma}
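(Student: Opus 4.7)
The plan is a standard application of LP weak duality combined with linearity of expectation, along the lines of Devanur--Jain--Kleinberg~\cite{soda/DevanurJK13}. Let $\text{ALG}(\vecy)$ denote the weight of the matching produced by \ranking on rank vector $\vecy$, so $\text{ALG}(\vecy) = \sum_{(u,v)\in E} w_v \cdot x_{uv}(\vecy)$. The first hypothesis says that, pointwise in $\vecy$, this primal objective equals the dual objective $\sum_{u \in U}\alpha_u(\vecy) + \sum_{v \in V}\alpha_v(\vecy)$. Taking expectation over $\vecy$ yields $\expect{\vecy}{\text{ALG}} = \sum_{u \in U}\expect{\vecy}{\alpha_u} + \sum_{v \in V}\expect{\vecy}{\alpha_v}$.

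Next I would define averaged, rescaled dual variables $\beta_u \eqdef \expect{\vecy}{\alpha_u}/F$ for every $u \in U$ and $\beta_v \eqdef \expect{\vecy}{\alpha_v}/F$ for every $v \in V$. Non-negativity of the $\beta$'s is inherited from that of the $\alpha$'s. Feasibility of the edge constraints follows from the second hypothesis together with linearity of expectation: for every $(u,v)\in E$,
\[
\beta_u + \beta_v \;=\; \frac{\expect{\vecy}{\alpha_u + \alpha_v}}{F} \;\geq\; \frac{F\cdot w_v}{F} \;=\; w_v.
\]
Hence $\{\beta_u\}_{u \in U} \cup \{\beta_v\}_{v \in V}$ is a feasible solution of the dual LP whose objective value equals $\expect{\vecy}{\text{ALG}}/F$.

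Finally I would invoke weak LP duality together with the fact that the LP relaxation of bipartite matching upper bounds the offline optimum $\text{OPT}$: this yields $\text{OPT} \leq \sum_{u} \beta_u + \sum_{v} \beta_v = \expect{\vecy}{\text{ALG}}/F$, i.e., $\expect{\vecy}{\text{ALG}} \geq F \cdot \text{OPT}$, which is the desired competitive guarantee.

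There is no genuine obstacle here; the lemma simply packages a one-line duality argument. The only mild care points are (i) noting that the primal assignment $x$ induced by the random matching is itself integer-feasible pointwise, which is automatic since \ranking outputs a matching and thus satisfies both sets of degree-one constraints, and (ii) observing that the expectation commutes through the finite sum defining the dual objective, which is immediate by linearity of expectation.
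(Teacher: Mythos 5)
Your proof is correct and follows essentially the same route as the paper's: define averaged, rescaled dual variables $\expect{\vecy}{\alpha_\cdot}/F$, check feasibility via the expected approximate-complementary-slackness hypothesis, and invoke weak LP duality. The only cosmetic difference is that you name the rescaled duals $\beta$ and spell out linearity of expectation, whereas the paper uses $\tilde{\alpha}$ and states the same thing more tersely.
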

\begin{proof}
	We can set a feasible dual solution $\tilde{\alpha}_u := \expect{\vecy}{\alpha_u}/F$ for all $u\in V$. It's feasible because we have $\tilde{\alpha}_u + \tilde{\alpha}_v = \expect{\vecy}{\alpha_u + \alpha_v}/F \ge w_v$ for all $(u,v) \in E$.
	Then by duality we know that the dual solution is at least the optimal primal solution \textsf{PRIMAL}, which is also at least the optimal offline solution of the problem: 
	$\sum_{u \in V} \tilde{\alpha}_u \ge \textsf{PRIMAL} \ge \textsf{OPT}$.
	Then by the first assumption, we have
	$\textsf{OPT} \leq \sum_{u \in V} \tilde{\alpha}_u = \sum_{u \in V} \frac{\expect{\vecy}{{\alpha_u}}}{F} = \frac{1}{F}\expect{\vecy}{\sum_{u \in V} {\alpha_u}} =  \frac{1}{F}\expect{\vecy}{\sum_{(u,v)\in E} w_v\cdot x_{uv}} = \frac{1}{F}\expect{}{\textsf{ALG}}$,
	which implies an $F$ competitive ratio.
\end{proof}

In the rest of the paper, we set
\[
g(x,y) = \frac{1}{2}\big(h(x)+1-h(y)\big), \qquad \forall x,y\in[0,1]
\]
where $h: [0,1]\rightarrow [0,1]$ is a non-decreasing function (to be fixed later) with $h'(x) \leq h(x)$ for all $x\in[0,1]$.
Observe that $\frac{\partial g(x,y)}{\partial x} = \frac{1}{2}h'(x) \ge 0$ and $\frac{\partial g(x,y)}{\partial y} = -\frac{1}{2}h'(y) \le 0$. By definition of $g$, we have $g(x,y) + g(y,x) = 1$. 
Moreover, for any $x,y\in[0,1]$, we have the following fact that will be useful for our analysis.
\begin{claim}
	\label{eq:partial_g}
	$\frac{\partial g(x,y)}{\partial y} \ge g(x,y)-1$.
\end{claim}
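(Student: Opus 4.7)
The plan is a direct calculation reducing the claim to the two assumptions on $h$. First I would plug in the explicit form $g(x,y) = \tfrac{1}{2}\bigl(h(x) + 1 - h(y)\bigr)$ on both sides. This gives
\[
\frac{\partial g(x,y)}{\partial y} = -\tfrac{1}{2}h'(y), \qquad g(x,y) - 1 = \tfrac{1}{2}\bigl(h(x) - h(y) - 1\bigr),
\]
so the inequality we want, $\frac{\partial g(x,y)}{\partial y} \geq g(x,y) - 1$, is equivalent (after multiplying by $2$ and rearranging) to
\[
h'(y) \leq h(y) + 1 - h(x).
\]

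Next I would invoke the two hypotheses the paper already placed on $h$. By assumption $h'(y) \leq h(y)$, so it suffices to show $h(y) \leq h(y) + 1 - h(x)$, i.e.\ $h(x) \leq 1$. This is immediate from $h : [0,1] \to [0,1]$. Chaining the two bounds gives $h'(y) \leq h(y) \leq h(y) + 1 - h(x)$, which is exactly what we need.

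There is no real obstacle here: the claim is a one-line verification once the definition of $g$ is unfolded, and the only nontrivial ingredients are the stipulation $h'\leq h$ (used on the first inequality) and the codomain constraint $h \leq 1$ (used on the second). The proof would be no more than three or four displayed lines in total.
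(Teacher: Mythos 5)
Your proof is correct and takes essentially the same route as the paper: unfold the definition of $g$, apply $h' \le h$, then apply $h \le 1$. You merely rearrange the inequality into the cleaner form $h'(y) \le h(y) + 1 - h(x)$ before chaining the two bounds, whereas the paper chains them directly in the original form; there is no substantive difference.
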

\begin{proof}
	$
	\frac{\partial g(x,y)}{\partial y} = -\frac{1}{2}h'(y) \geq -\frac{1}{2}h(y) \geq \frac{1}{2}(h(x)+1-h(y))-1 = g(x,y)-1. 
	$
\end{proof}

\section{A Simple Lower Bound} \label{sec:simple}

In this section, we prove a slightly smaller competitive ratio, $\frac{5}{4}-e^{-0.5}\approx 0.6434$, as a warm-up of the later analysis.

We reinterpret our algorithm as follows.
As time $t$ increases, each unmatched offline vertex $v \in V$ is dynamically priced at $w_v \cdot g(y_v, t)$. Since $g$ is non-increasing in the second dimension, the prices do not increase as time increases. 
Upon the arrival of $u \in U$, $u$ can choose from its unmatched neighbors by paying the corresponding price.
The utility of $u$ derived by choosing $v$ equals $w_v-w_v\cdot g(y_v,y_u)$. Then $u$ chooses the one that gives the highest utility.
Recall that $g$ is non-decreasing in the first dimension.
Thus, $u$ prefers offline vertices with smaller ranks, as they offer lower prices.

This leads to the following monotonicity property as in previous works~\cite{soda/AggarwalGKM11,soda/DevanurJK13}.

\begin{fact}[Monotonicity]\label{fact:monotonicity}
	For any $\vecy$, if $v\in V$ is unmatched when $u\in U$ arrives, then when $y_v$ increases, $v$ remains unmatched when $u$ arrives.
	Equivalently, if $v\in V$ is matched when $u\in U$ arrives, then when $y_v$ decreases, $v$ remains matched when $u$ arrives.
\end{fact}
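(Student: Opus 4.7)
The plan is a standard exchange argument by induction on the arrival order of the online vertices. Let $\vecy$ and $\vecy'$ be two rank vectors that agree on every coordinate except $y_v' > y_v$. Since only $y_v$ changes, online vertices arrive in the same order in both scenarios, and the two executions can be compared step by step. The key monotonicity observation is that $v$'s offer $w_v(1-g(y_v,t))$ weakly decreases at every time $t$ when $y_v$ increases, because $\partial g/\partial x \ge 0$; every other offline vertex's offer is identical in the two scenarios. So only $v$ becomes weakly less attractive to each arriving online vertex, and no other offer changes.

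The invariant I would prove by induction on $i$, the number of online arrivals processed, is the following: the symmetric difference of the partial matchings $M_i$ and $M_i'$ produced under $\vecy$ and $\vecy'$ forms a (possibly empty) alternating path in the underlying bipartite graph, and crucially, $v$ is never matched in the $\vecy'$-run while being unmatched in the $\vecy$-run. Equivalently, if $v$ sits on this alternating path, it is either interior to it or it is the endpoint whose incident path-edge lies in $M_i$ rather than $M_i'$. The inductive step is a case analysis on the choice made by the arriving online vertex $u_{i+1}$ in each scenario. Because the offers of non-$v$ vertices coincide and only $v$'s offer has dropped, the two choices differ only in ways that either leave the alternating path unchanged or extend or shift it along one of its current endpoints.

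The main obstacle is verifying the invariant in the delicate subcases: when $u_{i+1}$ matches $v$ in $\vecy$ but a different neighbor $z^\star$ in $\vecy'$, thereby lengthening the alternating path at $u_{i+1}$; and when $u_{i+1}$ hits the current endpoint of the alternating path in one of the scenarios, which can shift, shorten, or close the path. Each subcase needs an offer-comparison argument combined with the previous-step invariant to certify that the chosen neighbor in $\vecy'$ is indeed available and that $v$ remains on the ``safe'' side of the path (never becoming the $M'$-only endpoint). Once the invariant is established, the conclusion is immediate: at the moment $u$ arrives, if $v$ is unmatched in the $\vecy$-run then the invariant forbids $v$ from being matched in the $\vecy'$-run, which is exactly the statement of the fact.
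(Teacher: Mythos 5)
The paper itself does not prove Fact~\ref{fact:monotonicity}; it simply cites~\cite{soda/AggarwalGKM11,soda/DevanurJK13}. For the analogous monotonicity claims that the paper \emph{does} prove (inside Lemmas~\ref{lemma:v_gain} and~\ref{lemma:u_gain}), it uses a ``first vertex to deviate'' contradiction argument rather than an alternating-path invariant, so your route is genuinely different in flavor even if it's in the same family of exchange arguments.

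Your plan is sound in spirit, but the invariant as literally stated --- that $M_i\triangle M_i'$ is always a (possibly empty) alternating \emph{path} --- is false: the path can close into an alternating \emph{cycle}. Concretely, take $V=\{v,z\}$, $U=\{u_1,u_2\}$ with all edges present. With the smaller $y_v$, $u_1$ can take $v$ and $u_2$ then takes $z$; with the larger $y_v'$, $u_1$ can take $z$ and $u_2$ then takes $v$. Now $M_2\triangle M_2'$ is a 4-cycle. You do mention ``close the path'' as a subcase, but your invariant needs to be amended to allow this: if a cycle forms, every vertex on it is matched in both runs, the two executions coincide from then on (same set of free offline vertices, same offers), and the conclusion for $v$ holds trivially. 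Relatedly, since matchings only grow over time, the symmetric difference can only extend or close; it never ``shifts'' or ``shortens,'' so two of your listed subcases do not occur. With the cycle case added, the invariant does hold and the induction can be pushed through, so the gap is repairable.

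That said, for this particular fact you are working much harder than necessary. A self-contained one-paragraph proof: run the two executions (ranks $y_v<y_v'$, all else fixed) in parallel and let $u'$ be the first online vertex at which they diverge. Just before $u'$, both executions are in identical states, so all offers agree except that $v$'s offer is weakly higher in the $y_v$-execution. If the $y_v$-execution did \emph{not} match $u'$ to $v$, the vertex it chose would still be the argmax in the $y_v'$-execution (only $v$'s offer dropped, nothing else changed), contradicting divergence. Hence at the first divergence the $y_v$-execution matches $u'$ to $v$, after which $v$ is permanently matched in that execution; so at every later moment, and in particular when $u$ arrives, ``$v$ matched under $y_v'$'' implies ``$v$ matched under $y_v$,'' which is the contrapositive of the fact. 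This sidesteps the alternating-path bookkeeping entirely, and it is the same ``first deviation'' template that the paper actually uses in the proofs of Lemmas~\ref{lemma:v_gain} and~\ref{lemma:u_gain}.
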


\paragraph{Gain Sharing}
The above interpretation induces a straightforward gain sharing rule: whenever $u \in U$ is matched to $v \in V$, let $\alpha_v := w_v \cdot g(y_v, y_u)$ and $\alpha_u:= w_v \cdot (1-g(y_v, y_u)) = w_v \cdot g(y_u, y_v)$.

\medskip

Note that the gain of an offline vertex is larger if it is matched earlier, i.e., being matched earlier is more beneficial for offline vertices ($\alpha_v$ is larger).
However, the fact does not hold for online vertices.
For each online vertex $u\in U$, the earlier $u$ arrives (smaller $y_u$ is), the more offers $u$ sees.
On the other hand, the prices of offline vertices are higher when $u$ comes earlier.
Thus, it is not guaranteed that earlier arrival time $y_u$ induces larger $\alpha_u$.

This is where our algorithm deviates from previous ones~\cite{soda/AggarwalGKM11,soda/DevanurJK13}, in which the prices of offline vertices are static (independent of time).
The above observation is crucial and necessary for breaking the $1-\frac{1}{e}$ barrier in the random arrival model.

To apply Lemma~\ref{lemma:dual_fitting}, we consider a pair of neighbors $v\in V$ and $u \in U$. We fix an arbitrary assignment of ranks to all vertices but $u,v$.
Our goal is to establish a lower bound of $\frac{1}{w_v}\cdot \expect{}{\alpha_u + \alpha_v}$, where the expectation is simultaneously taken over $y_u$ and $y_v$.	

\begin{lemma}\label{lemma:thresholds}
	For each $y\in [0,1]$, there exist thresholds $1 \geq \theta(y) \geq \beta(y) \geq 0$ such that when $u$ arrives at time $y_u = y$,
	\begin{compactitem}
		\item if $y_v < \beta(y)$, $v$ is matched when $u$ arrives;
		\item if $y_v \in (\beta(y), \theta(y))$, $v$ is matched to $u$;
		\item if $y_v > \theta(y)$, $v$ is unmatched after $u$'s arrival.
	\end{compactitem}
	Moreover, $\beta(y)$ is a non-decreasing function and if $\theta(x) = 1$ for some $x\in[0,1]$, then $\theta(x') = 1$ for all $x' \geq x$.
\end{lemma}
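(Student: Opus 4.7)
The plan is to peel off the two decisions that determine $v$'s matching status: (i) whether $v$ is matched to some online vertex arriving before $u$, and (ii) if not, whether $u$ picks $v$ at arrival. The threshold $\beta$ will arise from (i), and $\theta$ from (ii).

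For (i), I would introduce the auxiliary instance $P_v$ obtained by deleting $u$ but keeping $v$. Its execution is independent of $y_u$ because it processes only online vertices in $U\setminus\{u\}$. Applying Fact~\ref{fact:monotonicity} inside $P_v$ at time $y$ shows that ``$v$ is matched by time $y$'' is monotone in $y_v$, hence a threshold $\beta(y)\in[0,1]$ exists with the stated behaviour. Monotonicity of $\beta$ follows immediately: since the state of $P_v$ at any fixed time does not depend on $y_u$, if $v$ is matched by time $y_1$ then it is matched by any later time $y_2\ge y_1$, so $\beta(y_1)\le\beta(y_2)$.

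For (ii), consider also deleting $v$ to get $P'$. If $y_v\ge\beta(y)$ then $v$ is never chosen in $P_v$ before time $y$, and a standard first-divergence argument shows $P_v$ and $P'$ have identical executions up to time $y$; in particular, the set $N(y)$ of $u$'s unmatched neighbors other than $v$ at time $y$ depends only on $y$. Upon arriving, $u$ picks the maximum offer in $\{v\}\cup N(y)$; since $w_v(1-g(y_v,y))$ is non-increasing in $y_v$, there is a threshold $\theta(y)\in[\beta(y),1]$ with $u$ matching $v$ iff $y_v<\theta(y)$. Writing $M(y):=\max_{v'\in N(y)}w_{v'}(1-g(y_{v'},y))$, on the interior $\theta(y)$ is pinned down by $w_v(1-g(\theta(y),y))=M(y)$, with $\theta(y)=1$ exactly when $v$ beats $M(y)$ even at rank $1$.

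For the stability of $\theta=1$, assume $\theta(x)=1$ and $x'\ge x$. If $\beta(x')=1$ then $\theta(x')=1$ trivially, so assume $\beta(x')<1$. Pick any $v^*\in N(x')$; matches in $P'$ are irrevocable, so $N(x')\subseteq N(x)$ and $v^*\in N(x)$. The hypothesis $\theta(x)=1$ yields $w_v(1-g(1,x))\ge w_{v^*}(1-g(y_{v^*},x))$, and substituting $g(r,t)=\tfrac{1}{2}(h(r)+1-h(t))$ shows the difference of these two quantities changes by $\tfrac{(w_v-w_{v^*})(h(x')-h(x))}{2}$ as $t$ advances from $x$ to $x'$. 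When $w_v\ge w_{v^*}$ this increment is non-negative and the inequality at $x'$ is immediate. The main obstacle is the subcase $w_v<w_{v^*}$, in which the gap can shrink; I would rule it out entirely by rearranging the inequality at $x$ and using $h(y_{v^*})\le h(1)$ to force $h(1)\ge 1+h(x)$, which is impossible whenever $h(x)>0$, a property satisfied by the eventual choice $h(x)=e^{x-1}$. Hence every $v^*\in N(x)$ satisfies $w_v\ge w_{v^*}$, the difference stays non-negative throughout $[x,x']$, and $\theta(x')=1$ as required.
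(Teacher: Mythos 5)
Your proof is essentially correct and follows the paper's high-level structure: $\beta$ from Fact~\ref{fact:monotonicity} and irrevocability, $\theta$ from monotone-in-$y_v$ pricing once $v$ is unmatched, and then a separate stability argument for $\theta=1$. The first two parts are the same. The difference is in the $\theta=1$ stability step. The paper argues by contradiction via a ratio inequality: if $u$ prefers some $z$ to $v$ at $y_u=x'$ with $y_v=1$, then $w_z\, g(x,y_z) = w_z\, g(x',y_z)\cdot\frac{g(x,y_z)}{g(x',y_z)}$ and one shows $\frac{g(x,y_z)}{g(x',y_z)}\ge\frac{g(x,1)}{g(x',1)}$ directly (a monotone-ratio fact that holds because $g$ is affine in $h$), yielding $w_z\, g(x,y_z)>w_v\, g(x,1)$, contradicting $\theta(x)=1$. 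You instead argue forward via the difference $D(t)=w_v(1-g(1,t))-w_{v^*}(1-g(y_{v^*},t))$, note $D(x')-D(x)=\tfrac{(w_v-w_{v^*})(h(x')-h(x))}{2}$, and then need the side claim $w_v\ge w_{v^*}$.

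There is one small gap in your side claim, though it is repairable. You dismiss the subcase $w_v<w_{v^*}$ by saying it forces $h(1)\ge 1+h(x)$, which you call impossible ``whenever $h(x)>0$, a property satisfied by the eventual choice $h(x)=e^{x-1}$.'' Two things: first, $e^{x-1}$ is the function from Devanur et al.\ for the adversarial model, not this paper's choices, which are $\min\{1,e^{x-0.5}\}$ (Section~\ref{sec:simple}) and $\min\{1,\tfrac12 e^x\}$ (Section~\ref{sec:improved}); second, Lemma~\ref{lemma:thresholds} is stated before $h$ is fixed, so you should not invoke a particular $h$. The needed positivity does follow from the paper's standing hypotheses on $h$ (non-decreasing, $h'\le h$, range in $[0,1]$): $h(1)\ge 1+h(x)$ with $h\le 1$, $h\ge 0$ forces $h(x)=0$ and $h(1)=1$; but if $h(x)=0$ then $h'\le h$ and Gr\"{o}nwall give $h\equiv 0$ on $[x,1]$, so $h(1)=0$, a contradiction. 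With that justification inserted your case split closes, and your argument is a valid (slightly more case-heavy) alternative to the paper's ratio argument, which sidesteps the $w_v$ vs.\ $w_{v^*}$ dichotomy entirely.
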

\begin{proof}
	Consider the moment when $u$ arrives.
	By Fact~\ref{fact:monotonicity}, there exists a threshold $\beta(y_u)$ such that $v$ is matched when $u$ arrives iff $y_v < \beta(y_u)$.
	Now suppose $y_v > \beta(y_u)$, in which case $v$ is not matched when $u$ arrives.
	Thus $v$ is priced at $w_v\cdot g(y_v,y_u)$ and $u$ can get utility $w_v\cdot g(y_u,y_v)$ by choosing $v$.
	
	Recall that $g(y_u,y_v)$ is non-increasing in terms of $y_v$.
	Let $\theta(y_u) \geq \beta(y_u)$ be the minimum value of $y_v$ such that $v$ is not chosen by $u$.
	In other words, when $\beta(y_u) < y_v < \theta(y_u)$, $u$ matches $v$
	and when $y_v > \theta(y_u)$, $v$ is unmatched after $u$'s arrival.
	
	Next we show that $\beta$ is a non-decreasing function of $y_u$.
	By definition, if $y_v < \beta(y_u)$, then $v$ is matched when $u$ arrives. Straightforwardly, when $y_u$ increases to $y'_u$ (arrives even later), $v$ would remain matched.
	Hence, we have $\beta(y'_u) \geq \beta(y_u)$ for all $y'_u > y_u$, i.e. $\beta$ is non-decreasing (refer to Figure~\ref{fig:theta_beta}).
	
	\begin{figure}[htb]
		\centering
		\centering\includegraphics[width = 0.4\textwidth]{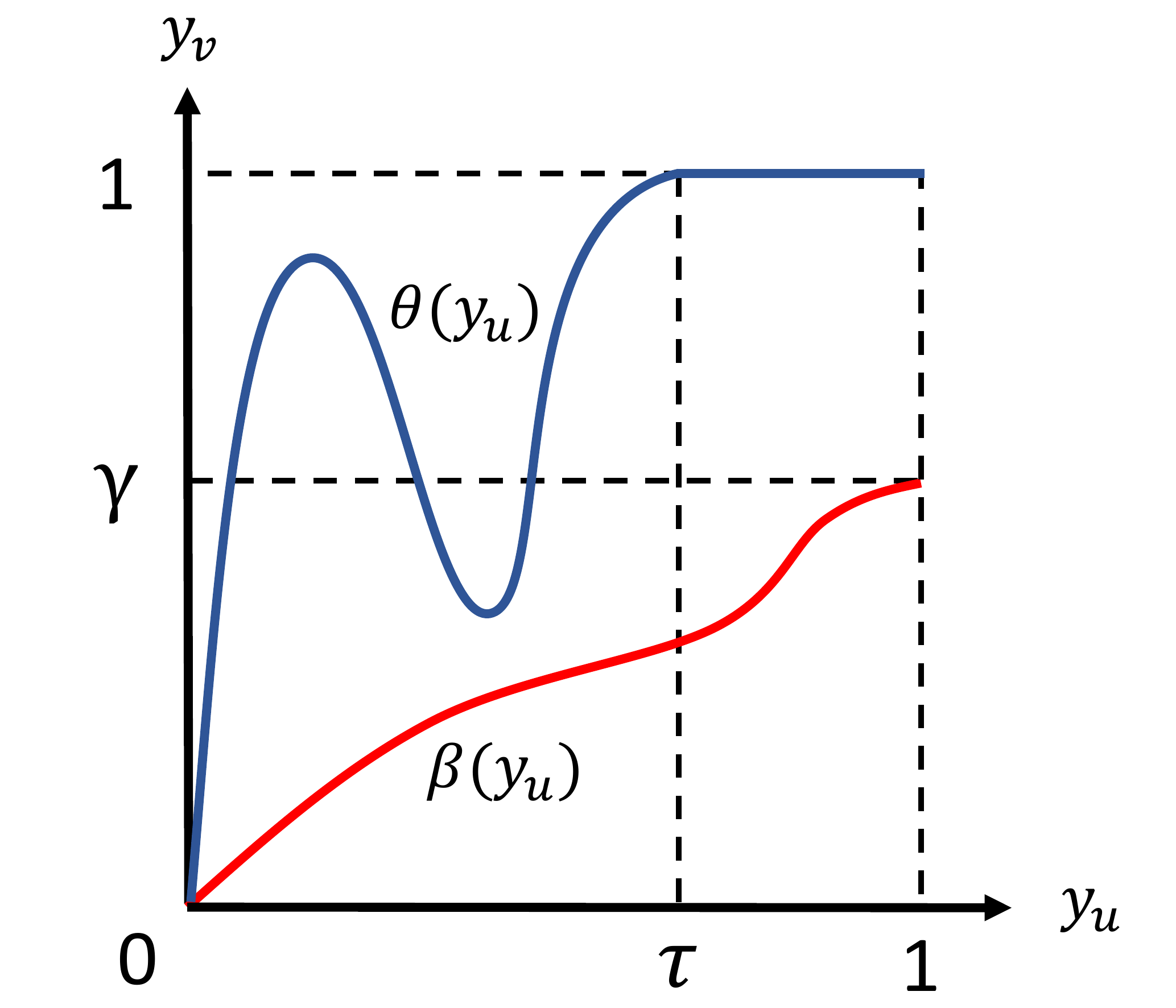}
		\centering\includegraphics[width = 0.4\textwidth]{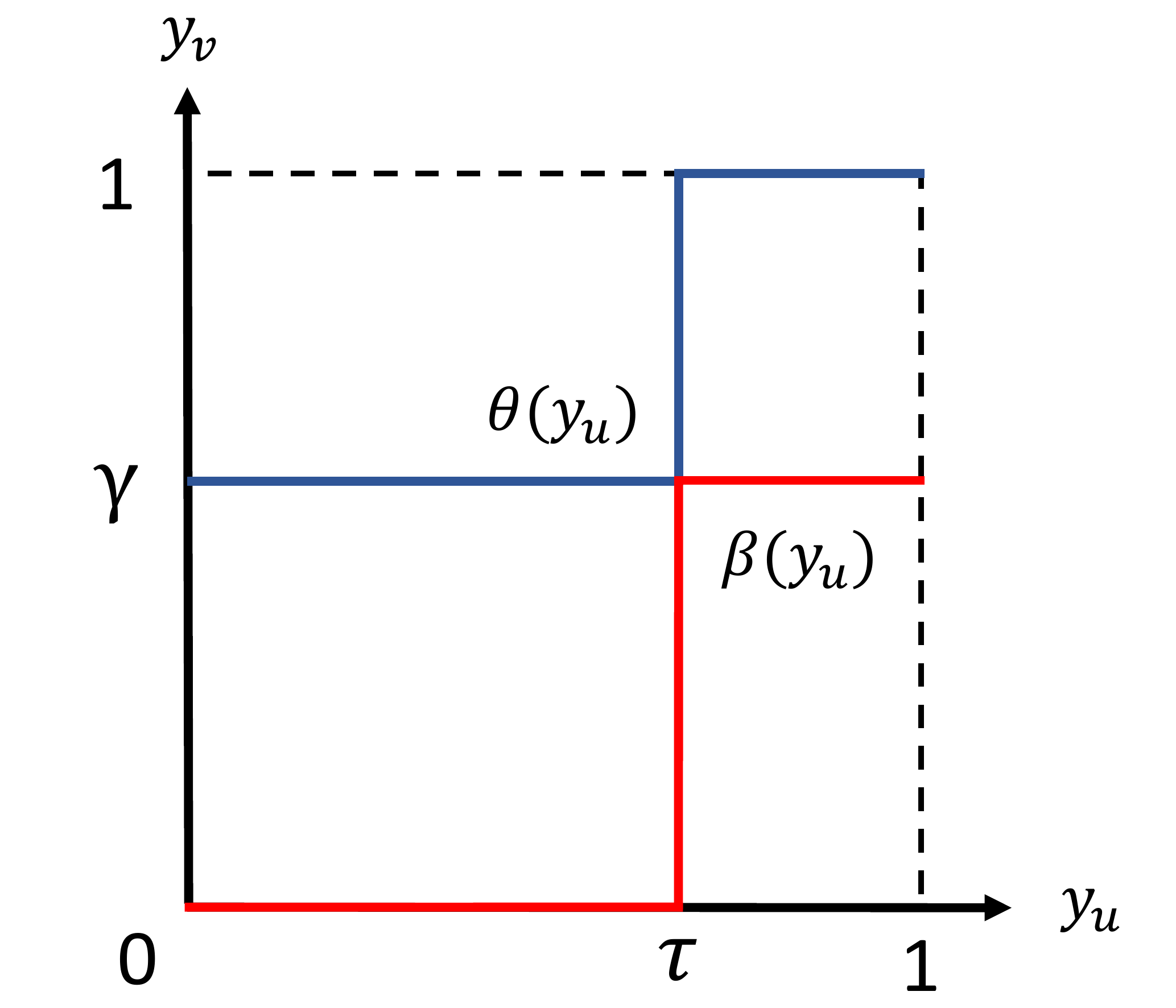}
		\caption{$\theta(y_u)$ and $\beta(y_u)$ (left hand side); truncated $\theta(y_u)$ and $\beta(y_u)$ (right hand side).}
		\label{fig:theta_beta}
	\end{figure}
	
	Finally, we show that if $\theta(x) = 1$ for some $x\in[0,1]$, then $\theta(x') = 1$ for all $x'\geq x$.
	Assume for the sake of contradiction that $\theta(x') < 1$ for some $x' > x$.
	In other words, when $y_u = x'$ and $y_v = 1$, $v$ is unmatched when $u$ arrives, but $u$ chooses some vertex $z\neq v$, such that $w_z\cdot g(x', y_z) > w_v\cdot g(x', 1)$. 
	
	Now consider the case when $u$ arrives at time $y_u = x$. Recall that we have $\theta(x) = 1$, which means that $u$ matches $v$ when $y_u = x$ and $y_v = 1$.
	By our assumption, both $v$ and $z$ are unmatched when $u$ arrives at time $x'$.
	Thus when $u$ arrives at an earlier time $x$, both $v$ and $z$ are unmatched.
	Moreover, choosing $z$ induces utility
	\begin{align*}
		w_z\cdot g(x, y_z) & = w_z\cdot g(x', y_z)\cdot \frac{g(x,y_z)}{g(x',y_z)} > w_v\cdot g(x', 1)\cdot \frac{g(x,y_z)}{g(x',y_z)} \\
		& = w_v\cdot g(x', 1)\cdot \frac{h(x)+1-h(y_z)}{h(x')+1-h(y_z)}\geq w_v\cdot g(x', 1)\cdot \frac{h(x)+1-h(1)}{h(x')+1-h(1)} \\
		& = w_v\cdot g(x', 1)\cdot\frac{g(x,1)}{g(x',1)} = w_v\cdot g(x,1),
	\end{align*}
	where the second inequality holds since $h$ is a non-decreasing function and $x < x'$.
	
	This gives a contradiction, since when $y_u = x$ and $y_v = 1$, $u$ chooses $v$, while choosing $z$ gives strictly higher utility.	
\end{proof}

\begin{remark}
	In the previous analysis by Devanur et al.~\cite{soda/DevanurJK13} on the arbitrary arrival model, a single marginal rank (independent of $y_u$) of $v$ is defined, and they do not distinguish whether $v$ is matched with $u$, as the gain sharing depends only on the rank of $v$, e.g., the definition of $\beta$ is unnecessary.
\end{remark}

\begin{remark}
	Observe that the function $\theta$ is not necessarily monotone.
	This comes from the fact that $u$ may prefer $v$ to $z$ when $u$ arrives at time $t$ but prefer $z$ to $v$ when $u$ arrives later at time $t' > t$.
	Note that this happens only when the offline vertices have general weights: for the unweighted case, it is easy to show that $\theta$ must be non-decreasing.
\end{remark}

We define $\tau,\gamma\in[0,1]$, which depend on the input instance, as follows.

If $\theta(y) < 1$ for all $y\in[0,1]$, then let $\tau = 1$; otherwise let $\tau$ be the minimum value such that $\theta(\tau) = 1$.
Let $\gamma := \beta(1)$.
Note that it is possible that $\gamma\in\{0,1\}$.

Since $\beta$ is non-decreasing, we define $\beta^{-1}(x) := \sup\{ y: \beta(y)=x \}$ for all $x \leq \gamma$.	

In the following, we establish a lower bound for $\frac{1}{w_v}\cdot\expect{}{\alpha_u + \alpha_v}$.
\begin{lemma}[Main Lemma]\label{lemma:bipartite_random}
	For each pair of neighbors $u\in U$ and $v\in V$, we have
	\begin{equation*}
		\frac{1}{w_v}\cdot \expect{}{\alpha_u + \alpha_v} \ge \min_{0\leq \gamma, \tau\leq1} \Big\{(1-\tau)\cdot(1-\gamma) + \int_0^\gamma g(x, \tau)dx + \int_0^\tau g(x,\gamma)dx  \Big\}.
	\end{equation*}
\end{lemma}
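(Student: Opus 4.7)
My plan is to fix the ranks of all vertices except $u$ and $v$, and then bound $\expect{y_u,y_v}{\alpha_u + \alpha_v}/w_v$ by integrating over $(y_u, y_v) \in [0,1]^2$. By Lemma~\ref{lemma:thresholds} the square splits into three regions, and I establish a pointwise lower bound in each. In region (i), where $y_v < \beta(y_u)$, $v$ was matched before $u$ by some $u^*$ with arrival time $y_{u^*} \leq \sigma(y_v) := \beta^{-1}(y_v)$, so by monotonicity of $g$ in its second argument $\alpha_v \geq w_v \cdot g(y_v, \sigma(y_v))$. In region (ii), where $\beta(y_u) < y_v < \theta(y_u)$, $v$ is matched to $u$, so $\alpha_u + \alpha_v = w_v$ identically. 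In region (iii), where $y_v > \theta(y_u)$, $v$ is available at $u$'s arrival but $u$ picks some neighbor $v' \neq v$ with a higher offer, giving $\alpha_u \geq w_v \cdot g(y_u, y_v) = w_v(1 - g(y_v, y_u))$.

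The first term $(1-\tau)(1-\gamma)$ of the lemma is extracted from the region (ii) contribution $w_v \int_0^1 (\theta(y_u) - \beta(y_u))\,dy_u$. Since $\theta(y_u) \equiv 1$ for $y_u \geq \tau$ and $\beta(y_u) \leq \beta(1) = \gamma$ for every $y_u$, we have $\int_\tau^1 (1 - \beta(y_u))\,dy_u \geq (1-\tau)(1-\gamma)$, which supplies the first term. The remaining task is then to show that the three leftover pieces, namely the region (i) contribution $\int_0^\gamma g(y_v, \sigma(y_v))(1 - \sigma(y_v))\,dy_v$ (obtained after swapping integration order via the equivalence $y_v < \beta(y_u) \Leftrightarrow y_u > \sigma(y_v)$), the region (iii) contribution $\int_0^\tau \int_{\theta(y_u)}^1 (1 - g(y_v, y_u))\,dy_v\,dy_u$, and the region (ii) residual $\int_0^\tau (\theta(y_u) - \beta(y_u))\,dy_u + \int_\tau^1 (\gamma - \beta(y_u))\,dy_u$, together dominate $\int_0^\gamma g(x, \tau)\,dx + \int_0^\tau g(x, \gamma)\,dx$.

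To carry out this final reduction, I plan to use the explicit form $g(x, y) = \frac{1}{2}(h(x) + 1 - h(y))$, which decouples the two coordinates, together with Claim~\ref{eq:partial_g} (that $\partial g/\partial y \geq g - 1$) to control how fast $g$ decays in its second argument. A case split on whether $\sigma(y_v) \leq \tau$, and symmetrically on whether $\theta(y_u) \leq \gamma$, should let me replace the instance-specific second arguments $\sigma(y_v)$ and $\theta(y_u)$ by the fixed constants $\tau$ and $\gamma$, with the region (ii) residual absorbing any resulting deficit.

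The principal obstacle is precisely this last reduction. The pair $\beta, \theta$ is only weakly constrained: $\theta$ need not be monotone on $[0, \tau)$, as emphasized in the Remark after Lemma~\ref{lemma:thresholds}, and $\sigma$ is only required to be non-decreasing with $\sigma(\gamma) = 1$. Arguing that the worst case over all admissible pairs collapses to a function of just $\tau$ and $\gamma$ requires a delicate accounting, in which losses in the region (i) and region (iii) integrals (coming from the $\sigma(y_v) > \tau$ or $\theta(y_u) > \gamma$ branches of the case split) are exactly balanced by the corresponding increase in the region (ii) residual. I expect this slack-and-deficit trade-off, anchored on the differential inequality of Claim~\ref{eq:partial_g}, to be the driving technical idea of the proof.
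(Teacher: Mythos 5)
The decomposition of $[0,1]^2$ into three regions is a reasonable starting point, and your pointwise bounds in regions (i) and (ii) are correct. However, your region (iii) bound $\alpha_u \geq w_v\cdot g(y_u,y_v)$ is strictly too weak, and this gap is fatal; the aggregation you describe cannot possibly yield the Main Lemma.

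The issue is that your bound decays in $y_v$, while the actual gain of $u$ does not. The paper's Lemma~\ref{lemma:u_gain} establishes, via a comparison-of-processes argument, the much stronger fact that $\alpha_u \geq w_v\cdot g(y_u,\theta(y_u))$ holds for \emph{every} value of $y_v$, including all of region (iii). The intuition is that once $y_v$ exceeds $\theta(y_u)$, raising $y_v$ further cannot change the matching produced before $u$'s arrival (this is exactly what the comparison argument proves), so $u$ keeps selecting the same alternative and its gain is pinned at the marginal value $w_v\cdot g(y_u,\theta(y_u))$, not the decreasing quantity $w_v\cdot g(y_u,y_v)$. A symmetric comparison argument (Lemma~\ref{lemma:v_gain}) shows $\alpha_v \geq w_v\cdot g(y_v,\beta^{-1}(y_v))$ for every $y_u$, not only in region (i). These two structural facts are the technical heart of the paper's proof, and your proposal omits them entirely.

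To see concretely that your weaker pointwise bound cannot suffice: take $\beta\equiv 0$ and $\theta\equiv 0$ on $[0,1)$ (so $\gamma=0$, $\tau=1$), which corresponds to $u$ never choosing $v$ and $v$ never being taken before $u$ arrives. Then regions (i) and (ii) are empty, and your total lower bound is
\[
R_3 = \int_0^1\!\!\int_0^1 g(y_u,y_v)\,dy_v\,dy_u = \tfrac{1}{2}
\]
for any $g$ of the form $g(x,y)=\tfrac12(h(x)+1-h(y))$, whereas the right-hand side of the Main Lemma at $(\tau,\gamma)=(1,0)$ equals $\int_0^1 g(x,0)\,dx$, which for the paper's choices of $h$ is about $0.6434$ (Section~\ref{sec:simple}) or $0.6534$ (Section~\ref{sec:improved}). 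So the inequality you are trying to close simply fails. The paper's bound handles this case because $g(y_u,\theta(y_u))=g(y_u,0)$ is constant in $y_v$, giving $\int_0^1 g(x,0)\,dx$ exactly.

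You also note, correctly, that the final aggregation over admissible $(\beta,\theta)$ is the ``principal obstacle'' and leave it unresolved; but even granting that step, the pointwise bound in region (iii) makes the target inequality false. The missing idea is the comparison argument: that lowering an online arrival time weakly benefits every offline vertex, and raising $y_v$ past $\theta(y_u)$ does not change the pre-$u$ matching. Once you have $\alpha_v\ge w_v\,g(y_v,\beta^{-1}(y_v))$ and $\alpha_u\ge w_v\,g(y_u,\theta(y_u))$ uniformly, the paper's decomposition into the three expectations $\onev(y_u>\tau,y_v>\gamma)$, $\onev(y_v<\gamma)$ (plus $\onev(y_v<\gamma,y_u>\tau)$), and $\onev(y_u<\tau)$ (plus $\onev(y_u<\tau,y_v>\gamma)$), together with the monotonicity of $f(x,t)=g(x,t)+\max\{0,t-\tau\}(1-g(x,t))$ in $t$ (driven by Claim~\ref{eq:partial_g}), gives the Main Lemma cleanly.
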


It is worthwhile to make a comparison with a similar claim in the previous analysis by Devanur et al.~\cite{soda/DevanurJK13} on the arbitrary arrival model: $\frac{1}{w_v}\cdot \expect{}{\alpha_u + \alpha_v} \ge\min_\theta \{\int_0^\theta g(y) dy + 1-g(\theta) \}$.
The first term in their lower bound comes from the gain of the offline vertex $v$ while the $1-g(\theta)$ term comes from the fact that the online vertex $u$ has gained at least $1-g(\theta)$ for all values of $y_v$.
Compared to theirs, our lower bound beats $1-\frac{1}{e}$ by utilizing the trade-off between the gain $\int_0^\gamma g(x,\tau) dx$ of $v$ and the ``marginal'' arrival time $\tau$ of $u$: in the previous analysis, only the trade-off between the gain $1-g(\theta)$ of $u$ and the marginal rank $\theta$ of $v$ is utilized.

We prove Lemma~\ref{lemma:bipartite_random} by the following three lemmas.

Observe that for any $y_u\in [0, 1]$, if $y_v \in (\beta(y_u), \theta(y_u))$, $u,v$ are matched to each other, which implies $\alpha_u+\alpha_v = w_v$. Hence we have the following lemma immediately.

\begin{lemma}[Corner Gain]\label{lemma:corner_gain}
	$\expect{}{(\alpha_u+\alpha_v)\cdot\onev(y_u>\tau, y_v>\gamma)} = w_v\cdot (1-\tau)\cdot(1-\gamma)$.
\end{lemma}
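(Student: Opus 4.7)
The plan is to observe that on the rectangle $\{y_u > \tau,\, y_v > \gamma\}$, the randomness forces $u$ and $v$ to be matched to each other, so that $\alpha_u+\alpha_v = w_v$ pointwise, and the claim then reduces to computing $\Pr[y_u>\tau,\,y_v>\gamma] = (1-\tau)(1-\gamma)$.

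To carry this out, I fix any $(y_u,y_v)$ in the open rectangle and verify the two sandwich conditions $\beta(y_u) < y_v < \theta(y_u)$ from Lemma~\ref{lemma:thresholds}. First I would check the upper threshold. By the definition of $\tau$, we have $\theta(\tau) = 1$ whenever $\tau < 1$, and the last part of Lemma~\ref{lemma:thresholds} guarantees that $\theta(y_u) = 1$ for every $y_u \geq \tau$; so for $y_u > \tau$, $\theta(y_u) = 1 \geq y_v$, and in fact $y_v < 1 = \theta(y_u)$ holds almost surely with respect to the uniform distribution on $y_v$. (If $\tau = 1$ then the event $\{y_u > \tau\}$ has probability zero and the equality $0 = 0$ is trivial.)

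Next I would check the lower threshold. Since $\beta$ is non-decreasing by Lemma~\ref{lemma:thresholds} and $\gamma = \beta(1)$, we have $\beta(y_u) \leq \beta(1) = \gamma < y_v$ for every $y_u \in [0,1]$ and every $y_v > \gamma$. Combining both bounds places $y_v$ in the middle interval $(\beta(y_u), \theta(y_u))$ described in Lemma~\ref{lemma:thresholds}, so $u$ matches $v$. By the gain sharing rule, $\alpha_v = w_v \cdot g(y_v,y_u)$ and $\alpha_u = w_v \cdot (1-g(y_v,y_u))$, hence $\alpha_u+\alpha_v = w_v$ everywhere on the rectangle (up to a measure-zero set).

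The computation then concludes by integrating:
\[
\expect{}{(\alpha_u+\alpha_v)\cdot\onev(y_u>\tau, y_v>\gamma)} = w_v \cdot \Pr[y_u>\tau]\cdot\Pr[y_v>\gamma] = w_v\cdot(1-\tau)\cdot(1-\gamma),
\]
using independence of $y_u$ and $y_v$ conditional on the fixed ranks of all other vertices. There is no real obstacle here: the lemma is essentially a bookkeeping identity that exposes the ``corner'' contribution in the decomposition that Lemma~\ref{lemma:bipartite_random} will later assemble; the only care needed is the verification that the two thresholds $\tau$ and $\gamma$ were defined precisely so that the rectangle $(\tau,1]\times(\gamma,1]$ sits inside the region where $u$ and $v$ match to each other.
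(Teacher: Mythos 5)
Your proof is correct and follows essentially the same reasoning the paper uses: the paper states (just before the lemma) that whenever $y_v \in (\beta(y_u),\theta(y_u))$ the pair $(u,v)$ is matched and hence $\alpha_u+\alpha_v = w_v$, and the lemma "follows immediately" because the rectangle $(\tau,1]\times(\gamma,1]$ sits inside that middle band. You have simply spelled out the two verifications the paper leaves implicit — that $\theta(y_u)=1$ for $y_u\geq\tau$ (via the monotone-tail property of $\theta$ and the measure-zero boundary) and that $\beta(y_u)\leq\beta(1)=\gamma$ for all $y_u$ — which is the same argument, just with the bookkeeping made explicit.
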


Now we give a lower bound for the gain of $v$ when $y_v < \gamma$, i.e., $\alpha_v\cdot \onev(y_v<\gamma)$, plus the gain of $u$ when $y_v < \gamma$ and $y_u > \tau$, i.e., $\alpha_u\cdot\onev(y_v<\gamma, y_u>\tau)$.
The key to prove the lemma is to show that for all $y_v < \gamma$, no matter when $u$ arrives, we always have $\alpha_v\geq w_v\cdot g(y_v,\beta^{-1}(y_v))$.

\begin{lemma}[$v$'s Gain]\label{lemma:v_gain}
	$\expect{}{\alpha_v\cdot \onev(y_v<\gamma)+\alpha_u\cdot\onev(y_v<\gamma, y_u>\tau)} \geq w_v\cdot \int_0^\gamma g(x, \tau)dx$.
\end{lemma}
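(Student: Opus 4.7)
I plan to prove the inequality by bounding the inner expectation over $y_u$ for each fixed $y_v \in [0,\gamma)$ and then integrating over $y_v$. The inner target is
\[
\expect{y_u}{\alpha_v + \alpha_u \cdot \onev(y_u > \tau)} \geq w_v \cdot g(y_v, \tau);
\]
integrating in $y_v$ (uniform on $[0,1]$, cut off by the indicator $\onev(y_v < \gamma)$) then produces the claimed $w_v\int_0^\gamma g(x,\tau)\,dx$.

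The structural heart of the argument is a single claim: for every $y_v < \gamma$ and every $y_u \in [0,1]$, $\alpha_v \geq w_v \cdot g(y_v, \beta^{-1}(y_v))$. Writing $b := \beta^{-1}(y_v)$, I establish this in two pieces. First, $v$ is matched in every realization: at $y_u = 1$ this is exactly $y_v < \beta(1) = \gamma$, and a \ranking-style monotonicity argument shows that moving $u$ to any earlier arrival time cannot unmatch $v$. Second, letting $u'$ denote $v$'s partner, the vertex $v$ was unmatched at $u'$'s arrival, so Lemma~3.1 applied at arrival time $y_{u'}$ yields $y_v \geq \beta(y_{u'})$; the monotonicity of $\beta$ gives $y_{u'} \leq b$, and the monotonicity of $g(y_v, \cdot)$ in its second argument delivers the bound.

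A companion observation is that on the (possibly empty) interval $y_u \in (\tau, b)$, $u$ must match $v$. Indeed, $y_u > \tau$ forces $\theta(y_u) = 1$ by definition of $\tau$, while $y_u < b$ forces $\beta(y_u) \leq y_v$, so Lemma~3.1 places $y_v$ in the ``$u$ matches $v$'' regime and $\alpha_u = w_v \cdot (1 - g(y_v, y_u))$. Combining the two observations, in the nontrivial case $b > \tau$ I get
\[
\expect{y_u}{\alpha_v + \alpha_u \cdot \onev(y_u > \tau)} \geq w_v \cdot g(y_v, b) + w_v \int_\tau^b \bigl(1 - g(y_v, y)\bigr)\,dy,
\]
where the structural bound $\alpha_v \geq w_v g(y_v, b)$ is used for all $y_u$ and the additional $\alpha_u$ contribution is added on $(\tau, b)$. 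Claim~2.1 says $\frac{\partial g(y_v, y)}{\partial y} + \bigl(1 - g(y_v, y)\bigr) \geq 0$; integrating from $\tau$ to $b$ gives $\int_\tau^b (1 - g(y_v, y))\,dy \geq g(y_v, \tau) - g(y_v, b)$, which upgrades the right-hand side to $w_v \cdot g(y_v, \tau)$. In the easier case $b \leq \tau$ the structural bound alone suffices since $g(y_v, b) \geq g(y_v, \tau)$.

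The main obstacle is the ``$v$ is always matched'' half of the structural claim. Lemma~3.1 only characterizes $v$'s status at $u$'s arrival, not at the end of the execution, so this requires a short but careful monotonicity argument tracking how $u$'s choice of alternative partners cascades through the subsequent online arrivals under the dynamic price rule. Once this is in hand, the rest is a one-line appeal to Claim~2.1.
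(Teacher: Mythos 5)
Your overall strategy mirrors the paper's: establish the structural bound $\alpha_v \geq w_v\, g(y_v,\beta^{-1}(y_v))$ for every $y_u$, add the exact contribution from the interval where $u$ matches $v$, and close with Claim~2.1. Your use of the exact value of $\alpha_u$ on $(\tau,b)$ and a direct integration of Claim~2.1 is a slight (and slightly looser, but still sufficient) variation of the paper's closing step, which keeps the bound in the form $g(x,b)+(b-\tau)(1-g(x,b))$ and differentiates in $b$; either route works.

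However, your derivation of the structural bound has a real gap, and you have located the difficulty in the wrong place. You split the claim into (i) ``$v$ is matched in every realization'' and (ii) ``$v$'s partner $u'$ satisfies $y_{u'}\leq b$,'' deriving (ii) by ``Lemma~3.1 applied at arrival time $y_{u'}$.'' But Lemma~3.1 defines $\beta(\cdot)$ as a function of \emph{$u$'s} arrival time only: $\beta(y)$ tells you whether $v$ is matched at the moment $u$ arrives in the instance where $y_u=y$. When $u'\neq u$, the expression $\beta(y_{u'})$ does not describe $v$'s status at $u'$'s arrival in the realization under consideration, and there is no valid instantiation of Lemma~3.1 that yields $y_v\geq\beta(y_{u'})$. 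Indeed, when $y_u < y_{u'}$, the process up to time $y_{u'}$ genuinely differs between the realization where $y_u$ is small and the one where $y_u = y_{u'}$, so no direct comparison is available. What is actually needed is not merely that $v$ stays matched, but that $v$ gets matched \emph{no later} when $y_u$ decreases; this is a strictly stronger monotonicity property, and it is exactly what the paper proves by a minimal-counterexample comparison of the two processes (``the first offline vertex to be matched later leads to a contradiction''). Once you have it, $y_{u'}\leq y_z \leq b$ follows for all $y_u$, and your appeal to Lemma~3.1 becomes unnecessary. So the ``short but careful monotonicity argument'' you flag as the obstacle must be strengthened to deliver (ii) as well; as written, piece (ii) does not follow from piece (i) plus Lemma~3.1.
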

\begin{proof}
	Fix $y_v = x < \gamma$. 
	We first show that for all $y_u\in[0,1]$, $\alpha_v \geq w_v\cdot g(x, \beta^{-1}(x))$.	
	By definition, we have $\beta^{-1}(x) < 1$.
	Hence when $y_u > \beta^{-1}(x)$, $v$ is already matched when $u$ arrives.
	Suppose $v$ is matched to some $z\in U$, then we have $y_z \leq \beta^{-1}(x)$ and hence $\alpha_v \geq w_v\cdot g(x, \beta^{-1}(x))$.
	Now consider when $u$ arrives at time $y < \beta^{-1}(x)$.
	If $y > y_z$, then $v$ is still matched to $z$ when $u$ arrives, and $\alpha_v \geq w_v\cdot g(x, \beta^{-1}(x))$ holds. 
	Now suppose $y < y_z$. 
	We compare the two processes, namely when $y_u > \beta^{-1}(x)$ and when $y_u = y$.
	
	We show that for each vertex $w \in V$, the time it is matched is not later in the second case (compared to the first case).
	In other words, we show that decreasing the rank of any online vertex is not harmful for all offline vertices.
	Suppose otherwise, let $w$ be the first vertex in $V$ that is matched later when $y_u=y$ than when $y_u > \beta^{-1}(x)$.
	I.e. among all these vertices, $w$'s matched neighbor arrives the earliest when $y_u > \beta^{-1}(x)$.
	
	Let $u_1$ be the vertex $w$ is matched to when $y_u > \beta^{-1}(x)$ and $u_2$ be the vertex $w$ is matched to when $y_u = y$.
	By assumption, we have $y_{u_2} > y_{u_1}$. 
	Consider when $y_u=y$ and the moment when $u_1$ arrives, $w$ remains unmatched but is not chosen by $u_1$.
	However, $w$ is the first vertex that is matched later than it was when $y_u > \beta^{-1}(x)$, we know that at $u_1$'s arrival, the set of unmatched neighbor of $u_1$ is a subset of that when $y_u > \beta^{-1}(x)$.
	This leads to a contradiction, since $w$ gives the highest utility, but is not chosen by $u_1$.
	
	In particular, this property holds for vertex $v$, i.e. $v$ is matched earlier or at the arrival of $z$ and hence $\alpha_v \geq w_v\cdot g(x, y_z)\ge w_v\cdot g(x, \beta^{-1}(x))$, as claimed.
	
	Observe that for $y_v<\gamma$ and $y_u \in (\tau, \beta^{-1}(y_v))$, we have $\alpha_u + \alpha_v = w_v$.
	Thus for $y_v = x<\gamma$, we lower bound $\frac{1}{w_v}\cdot \expect{y_u}{\alpha_v\cdot \onev(y_v<\gamma)+\alpha_u\cdot\onev(y_v<\gamma, y_u>\tau)}$ by
	\begin{equation*}
		f(x, \beta^{-1}(x)) := g(x, \beta^{-1}(x))+\max\{0,\beta^{-1}(x)-\tau\}\cdot (1-g(x,\beta^{-1}(x))).
	\end{equation*}
	
	It suffices to show that $f(x, \beta^{-1}(x)) \geq g(x, \tau)$. Consider the following two cases.
	\begin{enumerate}	
		\item If $\beta^{-1}(x) < \tau$, then $f(x,\beta^{-1}(x)) = g(x,\beta^{-1}(x)) \geq g(x,\tau)$, since $\frac{\partial g(x,y)}{\partial y} \leq 0$.
		
		\item If $\beta^{-1}(x) \ge \tau$, then $f(x, \beta^{-1}(x))$ is non-decreasing in the second dimension, since
		\begin{equation*}
			\frac{\partial f(x,\beta^{-1}(x))}{\partial \beta^{-1}(x)} = \frac{\partial g(x,\beta^{-1}(x))}{\partial \beta^{-1}(x)} + 1-g(x,\beta^{-1}(x))
			- (\beta^{-1}(x)-\tau)\cdot \frac{\partial g(x,\beta^{-1}(x))}{\partial \beta^{-1}(x)} \geq 0,
		\end{equation*}
		where the inequality follows from Claim~\ref{eq:partial_g} and the fact that $\frac{\partial g(x,\beta^{-1}(x))}{\partial \beta^{-1}(x)} \leq 0$.
		Therefore, we have $f(x, \beta^{-1}(x)) \geq f(x,\tau) = g(x,\tau)$.
	\end{enumerate}
	
	Hence for every fixed $y_v = x <\gamma$ we have
	\begin{equation*}
		\expect{y_u}{\alpha_v\cdot \onev(y_v<\gamma)+\alpha_u\cdot\onev(y_v<\gamma, y_u>\tau)} \geq w_v\cdot g(x,\tau).
	\end{equation*}
	
	Taking integration over $x\in (0,\gamma)$ concludes the lemma.
\end{proof}

Next we give a lower bound for the gain of $u$ when $y_u < \tau$, i.e., $\alpha_u\cdot \onev(y_u<\tau)$,
plus the gain of $v$ when $y_u < \tau$ and $y_v > \gamma$, i.e., $\alpha_v\cdot\onev(y_u<\tau, y_v>\gamma)$.
The following proof is in the same spirit as in the proof of Lemma~\ref{lemma:v_gain}, although the ranks of offline vertices have different meaning from the ranks (arrival times) of online vertices.

Similar to the proof of Lemma~\ref{lemma:v_gain}, the key is to show that for all $y_u < \tau$, no matter what value $y_v$ is, the gain of $\alpha_u$ is always at least $w_v\cdot g(y_u,\theta(y_u))$.

\begin{lemma}[$u$'s Gain]\label{lemma:u_gain}
	$\expect{}{\alpha_u\cdot \onev(y_u<\tau)+\alpha_v\cdot\onev(y_u<\tau, y_v>\gamma)} \geq w_v\cdot \int_0^\tau g(x, \gamma)dx$.
\end{lemma}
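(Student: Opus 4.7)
The plan is to follow the template of Lemma~\ref{lemma:v_gain} with the roles of the two sides swapped: we fix $y_u = y < \tau$ and vary $y_v$, and instead of the online marginal $\beta^{-1}$ we use the offline marginal $\theta(y)$. The heart of the argument is the pointwise bound $\alpha_u \ge w_v \cdot g(y,\theta(y))$ for every $y_v \in [0,1]$ at every fixed $y_u = y < \tau$, which we then integrate.

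To prove the pointwise bound I would start from a \emph{baseline} $y_v = \theta(y)^+$: by the definition of $\theta$, $v$ is unmatched when $u$ arrives and $u$ picks some $z^*$ with $w_{z^*} \cdot g(y,y_{z^*}) \ge w_v \cdot g(y,\theta(y))$. Three cases: if $\beta(y) < y_v < \theta(y)$, then $u$ matches $v$ and $\alpha_u = w_v \cdot g(y,y_v) \ge w_v \cdot g(y,\theta(y))$ by monotonicity of $g$ in its second argument; if $y_v > \theta(y)$, then no online vertex arriving before $u$ matched $v$ in the baseline, so raising $y_v$ only lowers $v$'s offer and does not change any earlier vertex's choice, hence $z^*$ is still unmatched when $u$ arrives and $u$ still picks it; if $y_v < \beta(y)$ I would run a first-counterexample exchange argument analogous to the one in Lemma~\ref{lemma:v_gain}, processing online vertices before $u$ in arrival order and showing inductively that each $u_j$'s matched utility in the current scenario is at least that in the baseline---if $u_j$ were the first to strictly lose utility, its baseline match $z_A$ (which cannot be $v$) must be stolen in the current scenario by some strictly earlier $u_{j'}$, and combining the inductive hypothesis on $u_{j'}$ with the baseline preference of $u_{j'}$ for $m_A(u_{j'})$ over $z_A$ yields a contradiction.

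With the pointwise bound, I would evaluate $\int_0^1 (\alpha_u + \alpha_v\cdot\onev(y_v>\gamma))\,dy_v$ for fixed $y_u = y < \tau$. If $\theta(y) \le \gamma$, the integrand already exceeds $w_v \cdot g(y,\theta(y)) \ge w_v \cdot g(y,\gamma)$ pointwise. If $\theta(y) > \gamma$, using $\alpha_u + \alpha_v = w_v$ on $y_v \in (\gamma,\theta(y))$ (where $u$ matches $v$) and the pointwise bound elsewhere gives
\[
\int_0^1 \bigl(\alpha_u + \alpha_v\cdot\onev(y_v>\gamma)\bigr)\,dy_v \;\ge\; w_v\bigl[g(y,\theta(y)) + (\theta(y)-\gamma)(1-g(y,\theta(y)))\bigr].
\]
Writing $f^*(y,t) := g(y,t) + (t-\gamma)(1-g(y,t))$, the remaining algebraic check $f^*(y,\theta(y)) \ge g(y,\gamma)$ for $\theta(y) \ge \gamma$ follows from $\partial_t f^*(y,t) \ge 0$ on $[\gamma,1]$, which is the same inequality established via Claim~\ref{eq:partial_g} in the analogous step of Lemma~\ref{lemma:v_gain}. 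Integrating over $y \in [0,\tau]$ then yields the statement.

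The main obstacle is the exchange argument in the third case of the pointwise bound. Unlike Lemma~\ref{lemma:v_gain}, where decreasing $y_u$ only accelerates each offline vertex's matching time, decreasing $y_v$ can trigger a long cascade of re-matchings among earlier online vertices: the offline vertex freed when some $u'$ switches to $v$ can be picked up by a later online vertex, which frees another offline vertex, and so on. The cleanest way I see to control this is to run the induction on matched \emph{utilities} of online vertices rather than matching times of offline vertices, and use the fact that if utility were to strictly drop at $u_j$, the ``lost'' vertex $z_A$ must have been stolen by a strictly earlier $u_{j'}$ whose baseline strict preference for $m_A(u_{j'})$ over $z_A$ contradicts the inductive hypothesis.
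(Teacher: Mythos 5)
Your proposal is correct and matches the paper's proof in all essentials: fix $y_u$, prove the pointwise bound $\alpha_u \ge w_v\cdot g(y_u,\theta(y_u))$ via a baseline at $y_v$ just above $\theta(y_u)$ plus a first-violation exchange argument on online-vertex utilities, then integrate using $\alpha_u+\alpha_v=w_v$ on the interval where $u$ matches $v$ and close the derivative check with Claim~\ref{eq:partial_g}. The only cosmetic difference is that you split the pointwise bound into three ranges of $y_v$ (with the intermediate range handled directly) whereas the paper treats all $y_v < \theta$ with a single exchange argument; and you correctly flag the key structural asymmetry with Lemma~\ref{lemma:v_gain}---the induction must be on utilities of online vertices rather than matching times of offline vertices---which is exactly the paper's argument.
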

\begin{proof}
	Fix $y_u = x < \tau$.
	By definition we have $\theta(x) < 1$.
	The analysis is similar to the previous.
	We first show that for all $y_v\in[0,1]$, we have $\alpha_u \geq w_v\cdot g(x, \theta(x))$.
	
	We use $\theta$ to denote the value that is arbitrarily close to, but larger than $\theta(x)$.
	By definition, when $y_v = \theta$, $u$ matches some vertex other than $v$.
	Thus we have $\alpha_u \geq w_v\cdot g(x, \theta(x))$.
	Hence, when $y_v > \theta$, i.e. $v$ has a higher price, $u$ would choose the same vertex as when $y_v = \theta$, and $\alpha_u \geq w_v\cdot g(x, \theta(x))$ still holds.
	
	Now consider the case when $y_v =y < \theta$.
	
	As in the analysis of Lemma~\ref{lemma:v_gain}, we compare two processes, when $y_v = \theta$ and when $y_v = y < \theta$. We show that for each vertex $w \in U$ (including $u$) with $y_{w}\leq x = y_u$, the utility of $w$ when $y_v = y$ is not worse than its utility when $y_v = \theta$.
	Suppose otherwise, let $w$ be such a vertex with earliest arrival time.
	
	Let $v'$ be the vertex that is matched to $w$ when $y_v = \theta$.
	Then we know that (when $y_v = y$) at $w$'s arrival, $w$ chooses a vertex that gives less utility comparing to $v'$. Hence, at this moment $v'$ is already matched to some $w'$ with $y_{w'} < y_{w}$.
	This implies that when $y_v = \theta$, $v'$ (which is matched to $w$) is unmatched when $w'$ arrives, but not chosen by $w'$.
	Therefore, $w'$ has lower utility when $y_v = y$ compared to the case when $y_v = \theta$, which contradicts the assumption that $w$ is the first such vertex.
	
	Observe that when $y_v \in (\gamma, \theta(x))$, we have $\alpha_u + \alpha_v = w_v$.	
	Thus for any fixed $y_u = x<\tau$, we lower bound $\frac{1}{w_v}\cdot\expect{y_v}{\alpha_u\cdot \onev(y_u<\tau)+\alpha_v\cdot\onev(y_u<\tau, y_v>\gamma)}$ by
	\begin{equation*}
		f(x, \theta(x)) := g(x, \theta(x))+\max\{0,\theta(x)-\gamma\}\cdot (1-g(x,\theta(x))).
	\end{equation*}
	
	In the following, we show that $f(x,\theta(x)) \geq g(x, \gamma)$. Consider the following two cases.
	\begin{enumerate}
		\item If $\theta(x) \leq \gamma$, then $f(x,\theta(x)) = g(x,\theta(x)) \geq g(x,\gamma)$, since $\frac{\partial g(x,y)}{\partial y} \leq 0$.
		
		\item If $\theta(x) > \gamma$, then
		\begin{equation*}
			\frac{\partial f(x,\theta(x))}{\partial \theta(x)} = \frac{\partial g(x,\theta(x))}{\partial \theta(x)} + 1-g(x,\theta(x))
			- (\theta(x)-\gamma)\cdot \frac{\partial g(x,\theta(x))}{\partial \theta(x)} \geq 0,
		\end{equation*}
		where the inequality follows from Claim~(\ref{eq:partial_g}) and $\frac{\partial g(x,\theta(x))}{\partial \theta(x)} \leq 0$.
		Therefore, we have $f(x,\theta(x)) \geq f(x,\gamma) = g(x,\gamma)$.	
	\end{enumerate}
	
	Finally, take integration over $x\in (0,\tau)$ concludes the lemma.
\end{proof}

\begin{proofof}{Lemma~\ref{lemma:bipartite_random}}
	Observe that
	\begin{align*}
		\alpha_u+\alpha_v = (\alpha_u+\alpha_v)\cdot\onev(y_u>\tau, y_v>\gamma) 
		& + \alpha_v\cdot \onev(y_v<\gamma)+\alpha_u\cdot\onev(y_v<\gamma, y_u>\tau) \\
		& + \alpha_u\cdot \onev(y_u<\tau)+\alpha_v\cdot\onev(y_u<\tau, y_v>\gamma)
	\end{align*}
	Combing Lemma~\ref{lemma:corner_gain},~\ref{lemma:v_gain} and~\ref{lemma:u_gain} finishes the proof immediately.
\end{proofof}

\begin{theorem}
	Fix $h(x) = \min\{1,e^{x-0.5}\}$.
	For any pair of neighbors $u$ and $v$, and any fixed ranks of vertices in $U\cup V\setminus\{u,v\}$, we have
	$\frac{1}{w_v}\cdot \expect{y_u,y_v}{\alpha_u + \alpha_v} \ge \frac{5}{4}-e^{-0.5} \approx 0.6434$.
\end{theorem}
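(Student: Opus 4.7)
The plan is to evaluate the lower bound from Lemma~\ref{lemma:bipartite_random} under the specific choice $h(x) = \min\{1, e^{x-0.5}\}$ and show that the resulting two-variable function is minimized at the value $\tfrac{5}{4} - e^{-0.5}$. Writing $H(x) := \int_0^x h(s)\,ds$ and substituting $g(x,y) = \tfrac{1}{2}(h(x) + 1 - h(y))$, the quantity to minimize over $(\gamma, \tau) \in [0,1]^2$ becomes
\[
\Phi(\gamma, \tau) \;=\; (1-\tau)(1-\gamma) \;+\; \tfrac{1}{2}\bigl(H(\gamma) + H(\tau)\bigr) \;+\; \tfrac{\gamma + \tau}{2} \;-\; \tfrac{1}{2}\bigl(\gamma\, h(\tau) + \tau\, h(\gamma)\bigr),
\]
which is symmetric in $\gamma$ and $\tau$. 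Since $h(\tfrac{1}{2}) = 1$ and $H(\tfrac{1}{2}) = 1 - e^{-0.5}$, one checks directly that $\Phi(\tfrac{1}{2}, \tfrac{1}{2}) = \tfrac{1}{4} + H(\tfrac{1}{2}) = \tfrac{5}{4} - e^{-0.5}$, so it remains to show this is the global minimum.

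The structural fact I would exploit is that $h'(x) = h(x)$ on $[0, \tfrac{1}{2}]$ (the exponential piece) and $h'(x) = 0$ on $[\tfrac{1}{2}, 1]$ (the flat piece), so $\Phi$ is smooth within each of the four subrectangles of the partition of $[0,1]^2$ cut by the lines $\gamma = \tfrac{1}{2}$ and $\tau = \tfrac{1}{2}$. To rule out interior critical points I would sum the two first-order conditions, which after cancellation yields
\[
\tau\, h'(\gamma) + \gamma\, h'(\tau) \;=\; 2(\gamma + \tau - 1).
\]
In $[0, \tfrac{1}{2}]^2$ the left-hand side is non-negative while the right-hand side is non-positive, so equality is impossible in the interior. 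In $[\tfrac{1}{2}, 1]^2$ the left-hand side vanishes, forcing $\gamma + \tau = 1$, which combined with $\partial_\gamma \Phi = 0$ pins down $\gamma = \tau = \tfrac{1}{2}$. In a mixed rectangle, say $\gamma \ge \tfrac{1}{2} \ge \tau$, plugging $h'(\gamma) = 0$ and $h(\gamma) = 1$ into $\partial_\gamma \Phi = 0$ collapses the equation to $\tau = \tfrac{1}{2} e^{\tau - 0.5}$, whose only solution on $[0, \tfrac{1}{2}]$ is $\tau = \tfrac{1}{2}$, returning the candidate to the boundary of the rectangle.

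The remaining task is to verify the inequality on the boundary $\partial([0,1]^2)$. Each of the four edges reduces $\Phi$ to an explicit one-variable function whose monotonicity is readable from the piecewise formula for $h$: the derivative of $\Phi(0, \tau)$ equals $\tfrac{1}{2}(h(\tau) - 1 - e^{-0.5}) < 0$, so $\Phi(0, \cdot)$ attains its minimum $\tfrac{5}{4} - e^{-0.5}$ at $\tau = 1$; the derivative of $\Phi(1, \tau)$ equals $\tfrac{1}{2}(h(\tau) - h'(\tau))$, which is $0$ on $[0, \tfrac{1}{2}]$ and $\tfrac{1}{2}$ on $[\tfrac{1}{2}, 1]$, so $\Phi(1, \cdot)$ is constant on $[0, \tfrac{1}{2}]$ at value $\tfrac{5}{4} - e^{-0.5}$ and strictly increases afterwards; the edges $\tau \in \{0, 1\}$ follow by symmetry. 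Combining the interior and boundary analyses yields $\Phi(\gamma, \tau) \ge \tfrac{5}{4} - e^{-0.5}$ for all $(\gamma, \tau) \in [0,1]^2$, which by Lemma~\ref{lemma:bipartite_random} gives the desired bound.

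The main obstacle I expect is the bookkeeping induced by the kink of $h$ at $\tfrac{1}{2}$: the Hessian of $\Phi$ is degenerate at $(\tfrac{1}{2}, \tfrac{1}{2})$, in part because the minimum is in fact attained on a whole one-dimensional locus (for example, $\Phi(\tfrac{1}{2}, \tau) = \tfrac{5}{4} - e^{-0.5}$ for every $\tau \in [\tfrac{1}{2}, 1]$), so a second-order test is unavailable and the proof has to stitch together one-dimensional monotonicity checks across the four subrectangles and four boundary edges rather than invoke strict convexity at a unique minimizer.
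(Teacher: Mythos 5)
Your derivation of the objective
\[
\Phi(\gamma,\tau) = (1-\tau)(1-\gamma) + \tfrac{1}{2}\bigl(H(\gamma)+H(\tau)\bigr) + \tfrac{\gamma+\tau}{2} - \tfrac{1}{2}\bigl(\gamma h(\tau)+\tau h(\gamma)\bigr)
\]
agrees with the paper's $f(\tau,\gamma)$, the summed first-order condition $\tau h'(\gamma)+\gamma h'(\tau)=2(\gamma+\tau-1)$ is correct, and the subrectangle and outer-edge computations all check out. However, there is a genuine gap in the reduction step ``the remaining task is to verify the inequality on the boundary $\partial([0,1]^2)$.'' Ruling out critical points in the \emph{open} interiors of the four subrectangles places the minimum of each closed subrectangle on that subrectangle's boundary, which includes the interior cross $\{\gamma=\tfrac12\}\cup\{\tau=\tfrac12\}$ and not just the outer square boundary. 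This is not a cosmetic omission: as you yourself notice, the global minimum is actually attained on the cross (e.g., $\Phi(\tfrac12,\tau)=\tfrac54-e^{-0.5}$ for all $\tau\in[\tfrac12,1]$), and the remaining piece $\Phi(\tfrac12,\tau)$ for $\tau\in[0,\tfrac12]$ still needs a one-line check (it equals $\tfrac54-e^{-0.5}+\tfrac14 e^{\tau-0.5}-\tfrac{\tau}{2}$, whose excess is nonnegative and vanishes only at $\tau=\tfrac12$). With the cross verified, your argument is complete.

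Beyond that repair, your route is genuinely different from the paper's. The paper exploits the symmetry $\Phi(\gamma,\tau)=\Phi(\tau,\gamma)$ to assume $\tau\ge\gamma$, then argues monotonicity of $\partial f/\partial\tau$: for $\gamma\le\tfrac12$ the derivative is $\le 0$, pushing $\tau$ to $1$ and then minimizing the one-variable $f(1,\gamma)$; for $\gamma>\tfrac12$ the derivative is $>0$, pushing $\tau$ down to $\gamma$ and then minimizing $f(\gamma,\gamma)$. That sweep in $\tau$ traverses the kink at $\tau=\tfrac12$ automatically, so the paper never has to single out the interior cross. Your critical-point-plus-boundary decomposition is a legitimate alternative and makes the geometry of the minimizing locus more visible, but it requires being scrupulous about every piece of the nonsmoothness set, which is exactly where the argument as written left a hole.
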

\begin{proof}
	It suffices to show that the RHS of Lemma~\ref{lemma:bipartite_random} is at least $\frac{5}{4}-e^{-0.5}$. Since the expression is symmetric for $\tau$ and $\gamma$, we assume $\tau \geq \gamma$ without loss of generality.
	
	Let $f(\tau,\gamma)$ be the term on the RHS of Lemma~\ref{lemma:bipartite_random} to be minimized.
	By our choice of $g$,
	\begin{align*}
		f(\tau,\gamma) = & 1-\tau-\gamma+\tau\cdot\gamma + \frac{1}{2}\int_0^\gamma\big(h(x)+1-h(\tau)\big) dx + \frac{1}{2}\int_0^\tau\big(h(x)+1-h(\gamma)\big) dx \\
		= & 1-\frac{\tau}{2}(1+h(\gamma))-\frac{\gamma}{2}(1+h(\tau))+\tau\cdot\gamma + \frac{1}{2}\int_0^\gamma h(x) dx + \frac{1}{2}\int_0^\tau h(x) dx .
	\end{align*}
	
	Observe that
	\begin{equation*}
		\frac{\partial f(\tau,\gamma)}{\partial \tau} = \gamma -\frac{1}{2}(1+h(\gamma))-\frac{\gamma}{2}\cdot h'(\tau) + \frac{1}{2} h(\tau).
	\end{equation*}
	
	It is easy to check that
	\begin{equation*}
		\gamma -\frac{1}{2}h(\gamma)
		\begin{cases}
			\leq 0 \qquad \text{when } \gamma \leq \frac{1}{2},\\
			> 0 \qquad \text{when } \gamma > \frac{1}{2}.
		\end{cases}
	\end{equation*}
	
	Hence when $\gamma \leq \frac{1}{2}$, we have
	\begin{equation*}
		\frac{\partial f(\tau,\gamma)}{\partial \tau} \leq \gamma -\frac{1}{2}h(\gamma) - \frac{1}{2} (1-h(\tau)) \leq 0,
	\end{equation*}
	which means that the minimum is attained when $\tau = 1$.
	Note that when $\gamma \leq \frac{1}{2}$, we have 
	\begin{equation*}
		f(1,\gamma) = \frac{1}{2}(1-h(\gamma))+ \frac{1}{2}\int_0^\gamma h(x) dx + \frac{1}{2}\int_0^1 h(x) dx,
	\end{equation*}
	which attains its minimum at $\gamma = 0$ (since $h'(\gamma) = h(\gamma)$ for $\gamma \le \frac{1}{2}$):
	\begin{equation*}
		f(1,0) = \frac{1}{2}(1-e^{-0.5}) + \frac{1}{2}(\frac{1}{2}+1-e^{-0.5}) = \frac{5}{4} - e^{-0.5}\approx 0.6434.
	\end{equation*}
	
	When $\tau \geq \gamma > \frac{1}{2}$, we have
	\begin{equation*}
		\frac{\partial f(\tau,\gamma)}{\partial \tau} = \gamma -\frac{1}{2}(1+h(\gamma))-\frac{\gamma}{2}\cdot 0 + \frac{1}{2} = \gamma -\frac{1}{2}h(\gamma) > 0
	\end{equation*}
	
	Hence the minimum is attained when $\tau = \gamma$, which is
	\begin{equation*}
		f(\gamma,\gamma) = 1- 2\gamma+\gamma^2 + \int_0^\gamma h(x) dx.
	\end{equation*}
	
	Observe that
	\begin{equation*}
		\frac{d f(\gamma,\gamma)}{d \gamma} = -2+2\gamma+h(\gamma) \geq -2+1+1 = 0.
	\end{equation*}
	
	The minimum is attained when $\gamma = \frac{1}{2}$, which equals $f(\frac{1}{2}, \frac{1}{2}) = \frac{5}{4} - e^{-0.5} \approx 0.6434$.
\end{proof}

\section{Improving the Competitive Ratio} \label{sec:improved}

Observe that in Lemma~\ref{lemma:bipartite_random}, we relax the total gain of $\alpha_u+\alpha_v$ into two parts:
\begin{enumerate}
	\item when $y_u\ge\tau$ and $y_v\ge\gamma$, $\alpha_u+\alpha_v=w_v$;
	\item for other ranks $y_u,y_v$, we lower bound $\alpha_u$ and $\alpha_v$ by $w_v\cdot g(y_u,\gamma)$ and $w_v\cdot g(y_v,\tau)$ respectively.
\end{enumerate}

For the second part, the inequalities used in the proof of Lemma~\ref{lemma:v_gain} and \ref{lemma:u_gain} are tight only if $\beta,\theta$ are two step functions (refer to Figure~\ref{fig:theta_beta}). 
On the other hand, given these $\beta,\theta$, when $y_u\le \tau$ and $y_v\le \gamma$, we actually have $\alpha_u+\alpha_v=w_v$, which is strictly larger than our estimate $w_v\cdot (g(y_u,\gamma)+g(y_v,\tau))$.

With this observation, it is natural to expect an improved bound if we can retrieve this part of gain (even partially).
In this section, we prove an improved competitive ratio $0.6534$, using a refined lower bound for $\frac{1}{w_v}\cdot\expect{}{\alpha_u+\alpha_v}$ (compared to Lemma~\ref{lemma:bipartite_random}) as follows.

\begin{lemma}[Improved Bound]\label{lemma:bipartite_random_improved}
	For any pair of neighbors $u\in U$ and $v\in V$, we have
	\begin{align*}
		\frac{1}{w_v}\cdot\expect{}{\alpha_u + \alpha_v} \ge \min_{0\leq \gamma, \tau\leq1} \bigg\{ & (1-\tau)(1-\gamma) + (1-\tau)\int_0^\gamma g(x, \tau)dx \\
		& + \int_0^\tau \min_{\theta \leq \gamma}\Big\{ g(x,\theta)+\int_0^\theta g(y,x) dy +\int_\theta^\gamma g(y,\tau) dy \Big\} dx  \bigg\}.
	\end{align*}
\end{lemma}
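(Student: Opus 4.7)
The plan is to decompose $\alpha_u + \alpha_v$ over three regions of $(y_u, y_v)$---the corner $\{y_u > \tau,\, y_v > \gamma\}$, the boundary strip $\{y_u > \tau,\, y_v < \gamma\}$, and the interior $\{y_u < \tau\}$---and derive one term of the improved bound from each. The corner contribution is the same $w_v(1-\tau)(1-\gamma)$ as in Lemma~\ref{lemma:corner_gain}: whenever $y_u > \tau$ and $y_v > \gamma$, we have $\beta(y_u) \le \gamma < y_v < 1 = \theta(y_u)$, so $u$ matches $v$ and $\alpha_u + \alpha_v = w_v$.

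For the boundary strip I will fix $y_v < \gamma$ and integrate over $y_u \in [\tau, 1]$ to obtain $\int_\tau^1 (\alpha_u + \alpha_v)\,dy_u \ge (1-\tau)\, w_v\, g(y_v, \tau)$. Two cases arise. If $\beta^{-1}(y_v) \le \tau$, every $y_u \ge \tau$ sees $v$ already matched at some time $\le \beta^{-1}(y_v) \le \tau$, so $\alpha_v \ge w_v\, g(y_v, \tau)$ pointwise. Otherwise, for $y_u \in (\tau, \beta^{-1}(y_v))$ we have $\theta(y_u) = 1 > y_v > \beta(y_u)$, so $u$ matches $v$ and $\alpha_u + \alpha_v = w_v$, while for $y_u > \beta^{-1}(y_v)$ we still have $\alpha_v \ge w_v\, g(y_v, \beta^{-1}(y_v))$ as in the proof of Lemma~\ref{lemma:v_gain}. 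Writing $t = \beta^{-1}(y_v)$ and $\phi(t) = (t-\tau) + (1-t)\, g(y_v, t)$, Claim~\ref{eq:partial_g} yields $\phi'(t) \ge t\,(1 - g(y_v, t)) \ge 0$, so $\phi(t) \ge \phi(\tau) = (1-\tau)\, g(y_v, \tau)$. Integrating over $y_v \in [0, \gamma]$ produces the second term of the lemma.

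For the interior strip I will fix $y_u = x < \tau$ and integrate $\alpha_u + \alpha_v$ over $y_v \in [0, 1]$. The pointwise estimates are: (i) $\alpha_u \ge w_v\, g(x, \theta(x))$ for every $y_v$, as in Lemma~\ref{lemma:u_gain}; (ii) $\alpha_v \ge w_v\, g(y_v, x)$ for $y_v \le \min(\theta(x), \gamma)$, since $v$ is matched by time $x$ (either before $u$ or to $u$ itself); and (iii) $\alpha_v \ge w_v\, g(y_v, \tau)$ for $\min(\theta(x), \gamma) \le y_v \le \gamma$. Estimate (iii) is the new ingredient: in the execution $y_u = \tau$, the equality $\theta(\tau) = 1$ forces $v$ to be matched by time $\tau$ for every $y_v$, and the monotonicity argument from the proof of Lemma~\ref{lemma:v_gain} (decreasing $y_u$ does not delay any offline vertex's matching time) transfers this bound to the execution at $y_u = x$. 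When $\theta(x) \le \gamma$, setting $\theta = \theta(x)$ in the minimand and summing the three pointwise estimates yields the required inequality directly.

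The main obstacle is the sub-case $\theta(x) > \gamma$, where we must set $\theta = \gamma$ even though estimate (i) only gives $\alpha_u \ge w_v\, g(x, \theta(x)) \le w_v\, g(x, \gamma)$. The compensation must come from the enlarged region $y_v \in [\beta(x), \theta(x)] \supseteq [\beta(x), \gamma]$ where $u$ matches $v$ exactly and $\alpha_u + \alpha_v = w_v$. Writing $a = g(x, \theta(x))$, $b = g(x, \gamma)$ and carrying out the integration, the desired inequality reduces, after bounding $\int_{\beta(x)}^\gamma g(y, x)\,dy \le (\gamma - \beta(x))(1 - b)$ via monotonicity of $g$ in the first argument, to $(1-a)(\theta(x) - \gamma) \ge (b - a)(1 - \gamma + \beta(x))$. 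Substituting $b - a = \frac{1}{2}(h(\theta(x)) - h(\gamma)) \le \frac{1}{2}(\theta(x) - \gamma)\, h(\theta(x))$ (using $h' \le h$ and $h$ non-decreasing) together with $1 - a = \frac{1}{2}(1 - h(x) + h(\theta(x)))$, this collapses to $1 - h(x) \ge h(\theta(x))\,(\beta(x) - \gamma)$, which is immediate since $h \le 1$ and $\beta(x) \le \beta(1) = \gamma$. Taking the minimum over $\theta \le \gamma$ in the interior bound and summing the three regional contributions completes the proof.
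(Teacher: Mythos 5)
Your proof is correct, and it uses the same three-region decomposition that the paper uses (the corner $\{y_u>\tau,\,y_v>\gamma\}$, the strip $\{y_u>\tau,\,y_v<\gamma\}$, and the interior $\{y_u<\tau\}$), relying on the same thresholds $\theta,\beta$ of Lemma~\ref{lemma:thresholds} and the same pointwise estimates $\alpha_u\ge w_v\,g(y_u,\theta(y_u))$ and $\alpha_v\ge w_v\,g(y_v,\beta^{-1}(y_v))$. Two of your local arguments, however, genuinely differ from the paper's. First, for the interior with $\theta(x)\le\gamma$ the paper bounds $\alpha_v\ge w_v\,g(y_v,\beta^{-1}(y_v))$ on $(\theta(x),\gamma)$ and only at the very end argues (via the non-negative derivative in $\beta^{-1}$) that it may as well assume $\beta^{-1}(y_v)\le\tau$; you instead obtain $\alpha_v\ge w_v\,g(y_v,\tau)$ directly by comparing to the execution with $y_u=\tau$, where $\theta(\tau)=1$ forces $v$ to be matched by time $\tau$, and invoking the monotonicity property. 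This shortcuts the $\beta^{-1}$-bookkeeping in the interior term entirely and makes the resulting integrand match the minimand of the statement verbatim. Second, for the subcase $\theta(x)>\gamma$ the paper appeals to the non-negative derivative computed in Lemma~\ref{lemma:u_gain} to push $\theta(x)$ down to $\gamma$; you instead reduce, after the estimate $\int_{\beta(x)}^{\gamma} g(y,x)\,dy \le (\gamma-\beta(x))(1-b)$, to the explicit inequality $(1-a)(\theta(x)-\gamma)\ge(b-a)(1-\gamma+\beta(x))$ and verify it from the specific form $g=\tfrac{1}{2}(h(\cdot)+1-h(\cdot))$ together with $h'\le h$, $h\le1$, and $\beta(x)\le\gamma$. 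Both alternatives buy a self-contained local verification at the cost of using the concrete form of $g$ more heavily than the paper does; since $g$ is fixed once and for all, this trade-off is harmless and your argument is sound.
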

\begin{proof}
	Let $\gamma$ and $\tau$ be defined as before, i.e., $\gamma = \beta(1)$ and $\tau = \min\{x:\theta(x)=1\}$.
	
	We divide $\frac{1}{w_v}\cdot \expect{}{\alpha_u + \alpha_v}$ into three parts, namely (1) when $y_u > \tau$ and $y_v > \gamma$; (2) when $y_u > \tau$ and $y_v < \gamma$; and (3) when $y_u < \tau$:
	\begin{align*}
		\frac{1}{w_v}\cdot \expect{}{\alpha_u + \alpha_v} = \quad &\frac{1}{w_v}\cdot \expect{}{(\alpha_u + \alpha_v)\cdot\onev(y_u > \tau,y_v > \gamma)} \\
		+ &\frac{1}{w_v}\cdot \expect{}{(\alpha_u + \alpha_v)\cdot \onev(y_u > \tau,y_v < \gamma)} \\
		+ &\frac{1}{w_v}\cdot \expect{}{(\alpha_u + \alpha_v)\cdot \onev(y_u < \tau)}.
	\end{align*}
	
	As shown in Lemma~\ref{lemma:corner_gain}, the first term is at least $(1-\tau)\cdot(1-\gamma)$, as we have $\alpha_u + \alpha_v = w_v$ for all $y_u > \tau$ and $y_v > \gamma$.
	Then we consider the second term, the expected gain of $\alpha_u + \alpha_v$ when $y_v < \gamma$ and $y_u > \tau$.
	For any $y_v < \gamma$, as we have shown in Lemma~\ref{lemma:v_gain}, $\alpha_v \geq w_v\cdot g(y_v,\beta^{-1}(y_v))$ for all $y_u > \tau$.
	Moreover, when $y_u < \beta^{-1}(y_v)$, we have $\alpha_u+\alpha_v = w_v$.
	Hence the second term can be lower bounded by
	\begin{equation*}
		\int_0^\gamma \Big( (1-\tau)\cdot g(y_v,\beta^{-1}(y_v)) + \max\{0, \beta^{-1}(y_v)-\tau\}\cdot \big( 1-g(y_v,\beta^{-1}(y_v)) \big) \Big) d y_v.
	\end{equation*}
	
	Now we consider the last term and fix a $y_u < \tau$.
	
	As we have shown in Lemma~\ref{lemma:u_gain}, for all $y_v\in[0,1]$, $\alpha_u \geq w_v\cdot g(y_u,\theta(y_u))$.
	
	Consider the case when $\theta(y_u) > \gamma$, then for $y_v \in (0, \gamma)$, $\alpha_v \geq w_v\cdot g(y_v, y_u)$;
	for $y_v \in (\gamma,\theta(y_u))$, $\alpha_u + \alpha_v = w_v$.
	Thus the expected gain of $\alpha_u + \alpha_v$ (taken over the randomness of $y_v$) can be lower bounded by
	\begin{equation*}
		w_v\cdot \Big( g(y_u,\theta(y_u)) + \int_0^\gamma g(y_v,y_u) dy_v + (\theta(y_u) - \gamma)\cdot(1-g(y_u,\theta(y_u))) \Big).
	\end{equation*} 
	
	As we have shown in Lemma~\ref{lemma:u_gain}, the partial derivative with respect to $\theta(y_u)$ is non-negative, thus for the purpose of lower bounding $\frac{1}{w_v}\cdot\expect{}{\alpha_u + \alpha_v}$, we can assume that $\theta(y_u) \leq \gamma$ for all $y_u < \tau$.
	
	Given that $\theta(y_u) \leq \gamma$, we have $\alpha_v \geq w_v\cdot g(y_v, y_u)$ when $y_v \in (0,,\theta(y_u))$; and $\alpha_v \geq w_v\cdot g(y_v, \beta^{-1}(y_v))$ when $y_v\in (\theta(y_u),\gamma)$.
	
	Hence the third term can be lower bounded by
	\begin{equation*}
		\int_0^\tau \Big( g(y_u,\theta(y_u)) + \int_0^{\theta(y_u)} g(y_v,y_u)d y_v + \int_{\theta(y_u)}^\gamma g(y_v,\beta^{-1}(y_v))d y_v \Big) d y_u 
	\end{equation*}
	
	Putting the three lower bounds together and taking the partial derivative with respect to $\beta^{-1}(y_v)$, for those $\beta^{-1}(y_v) > \tau$, we have a non-negative derivative as follows: 
	\begin{equation*}
		\frac{\partial g(y_v,\beta^{-1}(y_v))}{\partial \beta^{-1}(y_v)} + 1-g(y_v,\beta^{-1}(y_v))
		- (\beta^{-1}(y_v)-\tau)\cdot \frac{\partial g(y_v,\beta^{-1}(y_v))}{\partial \beta^{-1}(y_v)} \geq 0.
	\end{equation*}
	Thus for lower bounding $\frac{1}{w_v}\cdot \expect{}{\alpha_u + \alpha_v}$, we assume $\beta^{-1}(y_v) \leq \tau$ for all $y_v < \gamma$.
	Hence
	\begin{align*}
		\frac{1}{w_v}\cdot \expect{}{\alpha_u + \alpha_v} \ge & \min_{0\leq \gamma, \tau\leq 1}
		\Big\{  (1-\tau)(1-\gamma) + (1-\tau)\int_0^\gamma  g(y_v,\tau) d y_v \\
		& + \int_0^\tau \Big( g(y_u,\theta(y_u)) + \int_0^{\theta(y_u)} g(y_v,y_u)d y_v + \int_{\theta(y_u)}^\gamma g(y_v,\tau)d y_v \Big) d y_u 
		\Big\}.
	\end{align*}
	
	Taking the minimum over $\theta(y_u)$ concludes Lemma~\ref{lemma:bipartite_random_improved}.	
\end{proof}

Observe that for any $\theta\leq \gamma$, we have
\begin{equation*}
	g(x,\theta)+\int_0^\theta g(y,x) dy +\int_\theta^\gamma g(y,\tau) dy
	\geq g(x,\gamma) + \int_0^\gamma g(y,\tau) dy.
\end{equation*}
Thus the lower bound given by Lemma~\ref{lemma:bipartite_random_improved} is not worse than Lemma~\ref{lemma:bipartite_random}.

\begin{theorem}\label{th:improved}
	Fix $h(x) = \min\{1,\frac{1}{2}e^x\}$.
	For any pair of neighbors $u$ and $v$, and any fixed ranks of vertices in $U\cup V\setminus\{u,v\}$, we have
	$\frac{1}{w_v}\cdot \expect{y_u,y_v}{\alpha_u + \alpha_v} \ge 1-\frac{\ln{2}}{2} \approx 0.6534$.
\end{theorem}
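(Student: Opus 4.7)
The plan is to substitute $h(x) = \min\{1, \tfrac{1}{2}e^x\}$ into the bound of Lemma~\ref{lemma:bipartite_random_improved} and verify that the resulting three-variable minimum equals $1 - \tfrac{\ln 2}{2}$. The structural features of this $h$ I would exploit are: $h$ saturates at $1$ exactly at $x = \ln 2$, the constraint $h' \leq h$ holds with equality on $[0, \ln 2]$ (where $h'(x) = h(x) = \tfrac{1}{2}e^x$) and trivially on $[\ln 2, 1]$ (where $h'(x) = 0$), and $h(0) = \tfrac{1}{2}$. A key numerical identity is
$$\int_0^1 g(x, 0)\,dx = \tfrac{1}{4} + \tfrac{1}{2}\int_0^1 h(x)\,dx = \tfrac{1}{4} + \tfrac{1}{2}\Bigl(\tfrac{1}{2} + (1 - \ln 2)\Bigr) = 1 - \tfrac{\ln 2}{2},$$
which I expect to be the tight value, attained at the corners $(\tau,\gamma) = (1,0)$ and $(0,1)$.

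The first step is to solve the inner minimization over $\theta \in [0, \gamma]$. Writing $F(\theta) := g(x,\theta) + \int_0^\theta g(y,x)\,dy + \int_\theta^\gamma g(y,\tau)\,dy$, a direct computation gives
$$F'(\theta) \;=\; -\tfrac{1}{2}h'(\theta) + g(\theta, x) - g(\theta, \tau) \;=\; \tfrac{1}{2}\bigl(h(\tau) - h(x) - h'(\theta)\bigr).$$
Because $h'$ is non-decreasing on $[0, \ln 2]$ and vanishes afterwards, the minimizer falls into one of three regimes: a boundary value $\theta^\star = 0$, the other boundary $\theta^\star = \gamma$, or an interior $\theta^\star$ satisfying $\tfrac{1}{2}e^{\theta^\star} = h(\tau) - h(x)$, i.e.\ $\theta^\star = \ln\bigl(2(h(\tau) - h(x))\bigr)$. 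Which regime applies depends on the positions of $x$, $\tau$, and $\gamma$ relative to $\ln 2$, yielding a piecewise description of $\theta^\star$.

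After substituting $\theta^\star(x)$ and integrating over $x \in (0,\tau)$, the bound becomes a two-variable function $\Phi(\tau, \gamma)$ whose global minimum on $[0,1]^2$ is to be established as $1 - \tfrac{\ln 2}{2}$. I would first verify this on the boundary: along $\gamma = 0$ the expression collapses to $(1-\tau) + \int_0^\tau g(x,0)\,dx$, whose derivative $-1 + g(\tau, 0) = -\tfrac{3}{4} + \tfrac{1}{2}h(\tau)$ is strictly negative since $h \leq 1$, so the minimum is attained at $\tau = 1$ with value $1 - \tfrac{\ln 2}{2}$; the edge $\tau = 0$ is analogous with minimum at $\gamma = 1$. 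For the remaining edges and the interior, I would use the envelope identity $\partial_\theta F(\theta^\star) = 0$ (which avoids differentiating the piecewise $\theta^\star$) to compute $\partial_\tau \Phi$ and $\partial_\gamma \Phi$ and show they drive $\Phi$ toward the two symmetric corners.

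The main obstacle is the layered case analysis: the inner minimizer $\theta^\star$ has up to three pieces in $x$; $\tau$ and $\gamma$ may each lie above or below $\ln 2$; and combining these with the outer integrals yields several subcases for $\Phi$. Each individual piece reduces to an elementary integral of an exponential, but uniformly proving $\Phi(\tau,\gamma) \geq 1 - \tfrac{\ln 2}{2}$ across all subcases requires careful bookkeeping, together with a monotonicity argument on the edges $\tau = 1$ and $\gamma = 1$ (where the inner integral is nontrivial) to conclude that the infimum is attained precisely at the two corners identified above.
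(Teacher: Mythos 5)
Your overall skeleton matches the paper's proof of Theorem~\ref{th:improved}: substitute the specific $h$ into Lemma~\ref{lemma:bipartite_random_improved}, solve the inner minimization over $\theta$, and then minimize the resulting two-variable function over $(\tau,\gamma)$. However, there are concrete flaws in the parts of your sketch that are pinned down, and the genuinely hard part is left undone.

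First, your regime analysis for $\theta^\star$ is wrong for this particular $h$. Since $h(x)\in[\tfrac12,1]$ for all $x$, we always have $h(\tau)-h(x)\le\tfrac12$, while for $\theta\in(0,\ln 2)$ we have $h'(\theta)=\tfrac12 e^{\theta}>\tfrac12$. Hence $F'(\theta)=\tfrac12\bigl(h(\tau)-h(x)-h'(\theta)\bigr)<0$ on $(0,\ln 2)$ and $F'(\theta)=\tfrac12\bigl(h(\tau)-h(x)\bigr)\ge0$ on $(\ln 2,1]$. The interior stationarity equation $\tfrac12 e^{\theta^\star}=h(\tau)-h(x)$ has no solution with $\theta^\star>0$, and $\theta^\star=0$ is never the minimizer (except for the degenerate case $\gamma=0$). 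The only surviving regime is $\theta^\star=\min\{\ln 2,\gamma\}$ — which is exactly what the paper observes and exploits, whereas your three-regime picture would send you hunting for cases that do not exist. Relatedly, your plan to ``use the envelope identity $\partial_\theta F(\theta^\star)=0$'' fails precisely in the regime $\gamma<\ln 2$: there $\theta^\star=\gamma$ sits on the boundary of the constraint $\theta\le\gamma$ and $F'(\gamma)<0$ strictly, so $\partial_\gamma \Phi$ picks up an extra term $F'(\gamma)\cdot\tfrac{d\theta^\star}{d\gamma}$ that your sketch omits.

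Second, the heart of the paper's proof — showing that the resulting two-variable function is minimized at $1-\tfrac{\ln 2}{2}$ over all of $[0,1]^2$, not just on the two edges $\gamma=0$ and $\tau=0$ — is the difficult part, and you only gesture at it. The paper avoids the piecewise bookkeeping you flag as ``the main obstacle'' by replacing $\min_{\theta\le\gamma}q(\tau,x,\theta)$ with a single-formula surrogate before integrating, and then carrying out a sign analysis of $\partial_\gamma f$ and $\partial_\tau f$ with a case split at $\tau^\star\approx 0.3574$ and at $\gamma=\ln 2$; it is precisely this global monotonicity argument that closes the proof. Your boundary verification (the edges $\gamma=0$, $\tau=0$, both giving value $1-\tfrac{\ln 2}{2}$, and the correct numerical identity $\int_0^1 g(x,0)\,dx = 1-\tfrac{\ln 2}2$) is right, but without the interior analysis the argument does not establish the stated bound.
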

\begin{proof}
	For $h(x) = \min\{1, \frac{1}{2}e^x\}$, we have $h'(x) = h(x)$ when $x<\ln(2)$, and $h'(x)=0$, $h(x) = 1$ when $x>\ln(2)$.
	
	Let $f(\tau,\gamma)$ be the expression on the RHS to be minimized in Lemma~\ref{lemma:bipartite_random_improved}.
	Using $g(x,y) = \frac{1}{2}(h(x+1-h(y)))$, we have
	\begin{align}
		f(\tau,\gamma) = & (1-\tau)(1-\gamma) + \frac{1-\tau}{2}\big( \gamma\cdot(1-h(\tau)) + \int_0^\gamma h(x)dx \big) \nonumber \\
		& + \frac{1}{2}\int_0^\tau \min_{\theta \leq \gamma}\Big\{ 1+\gamma+ h(x)-h(\theta)-\theta\cdot h(x)-(\gamma-\theta)\cdot h(\tau)
		+ \int_0^\gamma h(x) dx \Big\} dx \nonumber \\
		= & (1-\tau)(1-\gamma) + \frac{\gamma}{2}\cdot(1-h(\tau))+\frac{\tau}{2}+\frac{1}{2}\int_0^\gamma h(x) dx + \frac{1}{2}\int_0^\tau \min_{\theta \leq \gamma}\big\{ q(\tau, x,\theta) \big\} dx \label{eq:f(tau,gamma)},
	\end{align}
	where $q(\tau, x,\theta) := h(x)-h(\theta)-\theta\cdot h(x) + \theta\cdot h(\tau)$.
	Observe that
	\begin{equation*}
		\frac{\partial q(\tau, x,\theta)}{\partial \theta} = h(\tau) - h(x) -h'(\theta) \begin{cases}
			< 0 \qquad \text{when } \theta < \ln{2},\\
			\geq 0 \qquad \text{when } \theta \geq \ln{2}.
		\end{cases}
	\end{equation*}
	
	Thus we can lower bound $q(\tau,x,\theta)$ by (recall that $\theta \leq \gamma$ and $x<\tau$)
	\begin{equation*}
		q(\tau,x,\min\{\ln{2}, \gamma\}) \geq h(x)-h(\gamma)-\ln{2}\cdot h(x) + \ln{2}\cdot h(\tau).
	\end{equation*}
	
	Applying the lower bound on $q(\tau,x,\theta)$ in Equation~\eqref{eq:f(tau,gamma)}, we have
	\begin{align*}
		f(\tau,\gamma) \geq & (1-\tau)(1-\gamma) + \frac{\gamma}{2}\cdot(1-h(\tau))+\frac{\tau}{2}+\frac{1}{2}\int_0^\gamma h(x) dx \\
		& \qquad + \frac{1}{2}\int_0^\tau \Big( h(x)-h(\gamma)-\ln{2}\cdot h(x) + \ln{2}\cdot h(\tau) \Big) dx\\
		= & (1-\tau)(1-\gamma) + \frac{\gamma}{2}(1-h(\tau)) + \frac{\tau}{2}(1-h(\gamma)) + \frac{1}{2}\int_0^\gamma h(x) dx \\
		& \qquad + \frac{\ln{2}}{2}\tau\cdot h(\tau) + \frac{1-\ln{2}}{2}\int_0^\tau h(x) dx.
	\end{align*}
	
	In the following, we show that $f(\tau, \gamma) \geq 1-\frac{\ln{2}}{2} \approx 0.6534$ for all $\tau,\gamma\in[0,1]$, which (when combined with Lemma~\ref{lemma:bipartite_random_improved}) yields Theorem~\ref{th:improved}.
	
	First, observe that
	\begin{equation*}
		\frac{\partial f(\tau,\gamma)}{\partial \gamma} = -(1-\tau) -\frac{\tau}{2}\cdot h'(\gamma) + \frac{1}{2}(1-h(\tau)) + \frac{1}{2} h(\gamma) = \frac{1}{2}\Big((h(\gamma)-\tau\cdot h'(\gamma)) - (1+h(\tau)-2\tau)\Big).
	\end{equation*}
	which is non-decreasing in $\gamma$.
	
	Note that $1+h(\tau)-2\tau$ is strictly decreasing.
	Let $\tau^* \approx 0.3574$ be the solution for $1+h(\tau)-2\tau = 1$.
	Then we know that for $\tau \leq \tau^*$,
	\begin{equation*}
		\frac{\partial f(\tau,\gamma)}{\partial \gamma} \leq \frac{1}{2}\left( h(\gamma) - 1 \right) \leq 0
	\end{equation*}
	
	Thus,
	\begin{equation*}
		f(\tau,\gamma) \geq f(\tau, 1) = \frac{1}{2}(1-h(\tau)) + \frac{\ln{2}}{2} \tau\cdot h(\tau) + \frac{1}{2}\int_0^1 h(y) dy  + \frac{1-\ln{2}}{2}\int_0^\tau h(x)dx.
	\end{equation*}
	Recall that for $\tau < \tau^*$, $h'(\tau) = h(\tau) = \frac{1}{2}e^\tau$. Since
	\begin{equation*}
		\frac{\partial f(\tau,1)}{\partial \tau} = -\frac{1}{2}h(\tau)+\frac{\ln{2}}{2} h(\tau)+\frac{\ln{2}}{2}\tau\cdot h(\tau) + \frac{1-\ln{2}}{2} h(\tau) = \frac{\ln{2}}{2}\tau\cdot h(\tau) \geq 0,
	\end{equation*}
	we have (for $\tau < \tau^*$)
	\begin{equation*}
		f(\tau,\gamma) \geq f(\tau,1) \geq f(0,1) =  \frac{1}{2}(1-h(0)) + \frac{1}{2}\int_0^1 h(y) dy = 1-\frac{\ln{2}}{2} \approx 0.6534.
	\end{equation*}
	
	Now we consider $\tau > \tau^*$, in which case $1+h(\tau)-2\tau < 1$.
	
	Observe that $1+h(\tau)-2\tau > 1-\tau$ for all $\tau \in[0,1]$, we have
	\begin{equation*}
		\frac{\partial f(\tau,\gamma)}{\partial \gamma} \begin{cases}
			< 0 \qquad \text{ when } \gamma < \ln{2}, \\
			> 0 \qquad \text{ when } \gamma > \ln{2}.
		\end{cases}
	\end{equation*}
	
	Hence for $\tau > \tau^*$ we have
	\begin{equation*}
		f(\tau,\gamma) \geq f(\tau, \ln{2}) = 
		(1-\tau)(1-\ln{2}) +\frac{\ln{2}}{2}\cdot(1-h(\tau))+\frac{\ln{2}}{2}\tau\cdot h(\tau) + \frac{1}{4}+\frac{1-\ln{2}}{2}\int_0^\tau h(x)dx.
	\end{equation*}
	Taking derivative over $\tau$ on the RHS, we have
	\begin{equation*}
		\frac{\partial f(\tau, \ln{2})}{\partial \tau} = 
		-(1-\ln{2}) - \frac{\ln{2}}{2}\cdot h'(\tau) + \frac{\ln{2}}{2}\tau\cdot h'(\tau) + \frac{1}{2} h(\tau),
	\end{equation*}
	which is $\frac{1}{2}-(1-\ln{2}) > 0$ when $\tau > \ln{2}$.
	For $\tau \leq \ln{2}$, we have
	\begin{equation*}
		\frac{\partial f(\tau, \ln{2})}{\partial \tau} \begin{cases}
			< 0 \qquad \text{when } \tau < \tau_0,\\ 
			\geq 0 \qquad \text{when } \tau \geq \tau_0, 
		\end{cases}
	\end{equation*}
	where $\tau_0 \approx 0.564375$ is the solution of $\frac{\partial f(\tau, \ln{2})}{\partial \tau} = 0$.
	Thus for $\tau > \tau^*$ we have
	\begin{align*}
		f(\tau,\gamma) \geq f(\tau, \ln{2}) \geq f(\tau_0, \ln{2})  =
		& (1-\tau_0)(1-\ln{2}) +\frac{\ln{2}}{4}\cdot(2-e^{\tau_0} + \tau_0\cdot e^{\tau_0}) + \frac{1}{4} \\
		& +\frac{1-\ln{2}}{4}(e^{\tau_0}-1) \approx 0.6557 > 1-\frac{\ln{2}}{2}.
	\end{align*}
	
	Thus for all $\tau,\gamma\in[0,1]$, we have $f(\tau,\gamma)\geq 1-\frac{\ln{2}}{2}$, as claimed.
\end{proof}

\section{Conclusion}\label{sec:conclusion}

In this paper, we show that competitive ratios above $1-\frac{1}{e}$ can be obtained under the randomized primal-dual framework when equipped with a two dimensional gain sharing function.
The key of the analysis is to lower bound the expected combined gain of every pair of neighbors $(u,v)$, over the randomness of the rank $y_v$ of the offline vertex, and the arrival time $y_u$ of the online vertex.

Referring to Figure~\ref{fig:theta_beta}, it can be shown that the competitive ratio $F\geq \int_0^1 f(y_u) dy_u$, where
\begin{align*}
	f(y_u) := & \left(1-\theta(y_u)+\beta(y_u)\right)\cdot g(y_u, \theta(y_u)) + \theta(y_u) - \beta(y_u) \\
	& + \int_0^{\beta(y_u)} g(y_v, \beta^{-1}(y_v)) d y_v + \int_{\theta(y_u)}^1 g(y_v, \beta^{-1}(y_v)) d y_v.
\end{align*}

Note that here we assume $\beta^{-1}(y_v) = 1$ for all $y_v\geq \gamma$, and $g(x,1) = 0$ for all $x\in [0,1]$.

For every fixed $g$, there exist threshold functions $\theta$ and $\beta$ that minimize the integration.
Thus the main difficulty is to find a function $g$ such that the integration has a large lower bound for all functions $\theta$ and $\beta$ (which depend on the input instance).
We have shown that there exists a choice of $g$ such that the minimum is attained when $\theta$ and $\beta$ are step functions, based on which we can give a lower bound on the competitive ratio.

It is thus an interesting open problem to know how much the competitive ratio can be improved by (fixing an appropriate function $g$ and) giving a tighter lower bound for the integration.
We believe that it is possible to give a lower bound very close to (or even better than) the $0.696$ competitive ratio obtained for the unweighted case~\cite{stoc/MahdianY11}.

\section*{Acknowledgements}

The first author would like to thank Nikhil Devanur, Ankit Sharma, and Mohit Singh with whom he made an initial attempt to reproduce the results of Mahdian and Yan using the randomized primal-dual framework.

{
	\bibliographystyle{ACM-Reference-Format}
	\bibliography{matching}


\begin{thebibliography}{26}


\ifx \showCODEN    \undefined \def \showCODEN     #1{\unskip}     \fi
\ifx \showDOI      \undefined \def \showDOI       #1{#1}\fi
\ifx \showISBNx    \undefined \def \showISBNx     #1{\unskip}     \fi
\ifx \showISBNxiii \undefined \def \showISBNxiii  #1{\unskip}     \fi
\ifx \showISSN     \undefined \def \showISSN      #1{\unskip}     \fi
\ifx \showLCCN     \undefined \def \showLCCN      #1{\unskip}     \fi
\ifx \shownote     \undefined \def \shownote      #1{#1}          \fi
\ifx \showarticletitle \undefined \def \showarticletitle #1{#1}   \fi
\ifx \showURL      \undefined \def \showURL       {\relax}        \fi
\providecommand\bibfield[2]{#2}
\providecommand\bibinfo[2]{#2}
\providecommand\natexlab[1]{#1}
\providecommand\showeprint[2][]{arXiv:#2}

\bibitem[\protect\citeauthoryear{Abolhassani, Chan, Chen, Esfandiari,
  Hajiaghayi, Mahini, and Wu}{Abolhassani et~al\mbox{.}}{2016}]%
        {esa/AbolhassaniCCEH16}
\bibfield{author}{\bibinfo{person}{Melika Abolhassani},
  \bibinfo{person}{T.{-}H.~Hubert Chan}, \bibinfo{person}{Fei Chen},
  \bibinfo{person}{Hossein Esfandiari}, \bibinfo{person}{MohammadTaghi
  Hajiaghayi}, \bibinfo{person}{Hamid Mahini}, {and} \bibinfo{person}{Xiaowei
  Wu}.} \bibinfo{year}{2016}\natexlab{}.
\newblock \showarticletitle{Beating Ratio 0.5 for Weighted Oblivious Matching
  Problems}. In \bibinfo{booktitle}{\emph{{ESA}}}
  \emph{(\bibinfo{series}{LIPIcs})}, Vol.~\bibinfo{volume}{57}.
  \bibinfo{publisher}{Schloss Dagstuhl - Leibniz-Zentrum fuer Informatik},
  \bibinfo{pages}{3:1--3:18}.
\newblock


\bibitem[\protect\citeauthoryear{Aggarwal, Goel, Karande, and Mehta}{Aggarwal
  et~al\mbox{.}}{2011}]%
        {soda/AggarwalGKM11}
\bibfield{author}{\bibinfo{person}{Gagan Aggarwal}, \bibinfo{person}{Gagan
  Goel}, \bibinfo{person}{Chinmay Karande}, {and} \bibinfo{person}{Aranyak
  Mehta}.} \bibinfo{year}{2011}\natexlab{}.
\newblock \showarticletitle{Online Vertex-Weighted Bipartite Matching and
  Single-bid Budgeted Allocations}. In \bibinfo{booktitle}{\emph{SODA}}.
  \bibinfo{pages}{1253--1264}.
\newblock


\bibitem[\protect\citeauthoryear{Aronson, Dyer, Frieze, and Suen}{Aronson
  et~al\mbox{.}}{1995}]%
        {rsa/Aronson1995}
\bibfield{author}{\bibinfo{person}{Jonathan Aronson}, \bibinfo{person}{Martin
  Dyer}, \bibinfo{person}{Alan Frieze}, {and} \bibinfo{person}{Stephen Suen}.}
  \bibinfo{year}{1995}\natexlab{}.
\newblock \showarticletitle{Randomized greedy matching. II}.
\newblock \bibinfo{journal}{\emph{Random Struct. Algorithms}}
  \bibinfo{volume}{6}, \bibinfo{number}{1} (\bibinfo{date}{Jan.}
  \bibinfo{year}{1995}), \bibinfo{pages}{55--73}.
\newblock
\showISSN{1042-9832}
\urldef\tempurl%
\url{https://doi.org/10.1002/rsa.3240060107}
\showDOI{\tempurl}


\bibitem[\protect\citeauthoryear{Bahmani and Kapralov}{Bahmani and
  Kapralov}{2010}]%
        {esa/BahmaniK10}
\bibfield{author}{\bibinfo{person}{Bahman Bahmani} {and}
  \bibinfo{person}{Michael Kapralov}.} \bibinfo{year}{2010}\natexlab{}.
\newblock \showarticletitle{Improved Bounds for Online Stochastic Matching}. In
  \bibinfo{booktitle}{\emph{{ESA} {(1)}}} \emph{(\bibinfo{series}{Lecture Notes
  in Computer Science})}, Vol.~\bibinfo{volume}{6346}.
  \bibinfo{publisher}{Springer}, \bibinfo{pages}{170--181}.
\newblock


\bibitem[\protect\citeauthoryear{Birnbaum and Mathieu}{Birnbaum and
  Mathieu}{2008}]%
        {sigact/BenjaminC08}
\bibfield{author}{\bibinfo{person}{Benjamin Birnbaum} {and}
  \bibinfo{person}{Claire Mathieu}.} \bibinfo{year}{2008}\natexlab{}.
\newblock \showarticletitle{On-line bipartite matching made simple}.
\newblock \bibinfo{journal}{\emph{ACM SIGACT News}} \bibinfo{volume}{39},
  \bibinfo{number}{1} (\bibinfo{year}{2008}), \bibinfo{pages}{80--87}.
\newblock


\bibitem[\protect\citeauthoryear{Brubach, Sankararaman, Srinivasan, and
  Xu}{Brubach et~al\mbox{.}}{2016}]%
        {esa/BrubachSSX16}
\bibfield{author}{\bibinfo{person}{Brian Brubach},
  \bibinfo{person}{Karthik~Abinav Sankararaman}, \bibinfo{person}{Aravind
  Srinivasan}, {and} \bibinfo{person}{Pan Xu}.}
  \bibinfo{year}{2016}\natexlab{}.
\newblock \showarticletitle{New Algorithms, Better Bounds, and a Novel Model
  for Online Stochastic Matching}. In \bibinfo{booktitle}{\emph{{ESA}}}
  \emph{(\bibinfo{series}{LIPIcs})}, Vol.~\bibinfo{volume}{57}.
  \bibinfo{publisher}{Schloss Dagstuhl - Leibniz-Zentrum fuer Informatik},
  \bibinfo{pages}{24:1--24:16}.
\newblock


\bibitem[\protect\citeauthoryear{Buchbinder, Jain, and Naor}{Buchbinder
  et~al\mbox{.}}{2007}]%
        {esa/BuchbinderJN07}
\bibfield{author}{\bibinfo{person}{Niv Buchbinder}, \bibinfo{person}{Kamal
  Jain}, {and} \bibinfo{person}{Joseph Naor}.} \bibinfo{year}{2007}\natexlab{}.
\newblock \showarticletitle{Online Primal-Dual Algorithms for Maximizing
  Ad-Auctions Revenue}. In \bibinfo{booktitle}{\emph{{ESA}}}
  \emph{(\bibinfo{series}{Lecture Notes in Computer Science})},
  Vol.~\bibinfo{volume}{4698}. \bibinfo{publisher}{Springer},
  \bibinfo{pages}{253--264}.
\newblock


\bibitem[\protect\citeauthoryear{Buchbinder, Jain, and Singh}{Buchbinder
  et~al\mbox{.}}{2014}]%
        {mor/BuchbinderJS14}
\bibfield{author}{\bibinfo{person}{Niv Buchbinder}, \bibinfo{person}{Kamal
  Jain}, {and} \bibinfo{person}{Mohit Singh}.} \bibinfo{year}{2014}\natexlab{}.
\newblock \showarticletitle{Secretary Problems via Linear Programming}.
\newblock \bibinfo{journal}{\emph{Math. Oper. Res.}} \bibinfo{volume}{39},
  \bibinfo{number}{1} (\bibinfo{year}{2014}), \bibinfo{pages}{190--206}.
\newblock


\bibitem[\protect\citeauthoryear{Chan, Chen, and Wu}{Chan
  et~al\mbox{.}}{2018}]%
        {talg/ChanCW18}
\bibfield{author}{\bibinfo{person}{T.{-}H.~Hubert Chan}, \bibinfo{person}{Fei
  Chen}, {and} \bibinfo{person}{Xiaowei Wu}.} \bibinfo{year}{2018}\natexlab{}.
\newblock \showarticletitle{Analyzing Node-Weighted Oblivious Matching Problem
  via Continuous {LP} with Jump Discontinuity}.
\newblock \bibinfo{journal}{\emph{{ACM} Trans. Algorithms}}
  \bibinfo{volume}{14}, \bibinfo{number}{2} (\bibinfo{year}{2018}),
  \bibinfo{pages}{12:1--12:25}.
\newblock


\bibitem[\protect\citeauthoryear{Chan, Chen, Wu, and Zhao}{Chan
  et~al\mbox{.}}{2014}]%
        {soda/ChanCWZ14}
\bibfield{author}{\bibinfo{person}{T.-H.~Hubert Chan}, \bibinfo{person}{Fei
  Chen}, \bibinfo{person}{Xiaowei Wu}, {and} \bibinfo{person}{Zhichao Zhao}.}
  \bibinfo{year}{2014}\natexlab{}.
\newblock \showarticletitle{Ranking on Arbitrary Graphs: Rematch via Continuous
  LP with Monotone and Boundary Condition Constraints.}. In
  \bibinfo{booktitle}{\emph{SODA}}. \bibinfo{pages}{1112--1122}.
\newblock


\bibitem[\protect\citeauthoryear{Cohen and Wajc}{Cohen and Wajc}{2018}]%
        {soda/CohenW18}
\bibfield{author}{\bibinfo{person}{Ilan~Reuven Cohen} {and}
  \bibinfo{person}{David Wajc}.} \bibinfo{year}{2018}\natexlab{}.
\newblock \showarticletitle{Randomized Online Matching in Regular Graphs}. In
  \bibinfo{booktitle}{\emph{{SODA}}}. \bibinfo{publisher}{{SIAM}},
  \bibinfo{pages}{960--979}.
\newblock


\bibitem[\protect\citeauthoryear{Devanur and Jain}{Devanur and Jain}{2012}]%
        {stoc/DevanurJ12}
\bibfield{author}{\bibinfo{person}{Nikhil~R. Devanur} {and}
  \bibinfo{person}{Kamal Jain}.} \bibinfo{year}{2012}\natexlab{}.
\newblock \showarticletitle{Online matching with concave returns}. In
  \bibinfo{booktitle}{\emph{{STOC}}}. \bibinfo{publisher}{{ACM}},
  \bibinfo{pages}{137--144}.
\newblock


\bibitem[\protect\citeauthoryear{Devanur, Jain, and Kleinberg}{Devanur
  et~al\mbox{.}}{2013}]%
        {soda/DevanurJK13}
\bibfield{author}{\bibinfo{person}{Nikhil~R. Devanur}, \bibinfo{person}{Kamal
  Jain}, {and} \bibinfo{person}{Robert~D. Kleinberg}.}
  \bibinfo{year}{2013}\natexlab{}.
\newblock \showarticletitle{Randomized Primal-Dual analysis of {RANKING} for
  Online BiPartite Matching}. In \bibinfo{booktitle}{\emph{{SODA}}}.
  \bibinfo{publisher}{{SIAM}}, \bibinfo{pages}{101--107}.
\newblock


\bibitem[\protect\citeauthoryear{Feldman, Mehta, Mirrokni, and
  Muthukrishnan}{Feldman et~al\mbox{.}}{2009}]%
        {focs/FeldmanMMM09}
\bibfield{author}{\bibinfo{person}{Jon Feldman}, \bibinfo{person}{Aranyak
  Mehta}, \bibinfo{person}{Vahab~S. Mirrokni}, {and} \bibinfo{person}{S.
  Muthukrishnan}.} \bibinfo{year}{2009}\natexlab{}.
\newblock \showarticletitle{Online Stochastic Matching: Beating 1-1/e}. In
  \bibinfo{booktitle}{\emph{{FOCS}}}. \bibinfo{publisher}{{IEEE} Computer
  Society}, \bibinfo{pages}{117--126}.
\newblock


\bibitem[\protect\citeauthoryear{Goel and Mehta}{Goel and Mehta}{2008}]%
        {soda/GoelM08}
\bibfield{author}{\bibinfo{person}{Gagan Goel} {and} \bibinfo{person}{Aranyak
  Mehta}.} \bibinfo{year}{2008}\natexlab{}.
\newblock \showarticletitle{Online budgeted matching in random input models
  with applications to Adwords}. In \bibinfo{booktitle}{\emph{SODA}}.
  \bibinfo{pages}{982--991}.
\newblock


\bibitem[\protect\citeauthoryear{Haeupler, Mirrokni, and
  Zadimoghaddam}{Haeupler et~al\mbox{.}}{2011}]%
        {wine/HaeuplerMZ11}
\bibfield{author}{\bibinfo{person}{Bernhard Haeupler},
  \bibinfo{person}{Vahab~S. Mirrokni}, {and} \bibinfo{person}{Morteza
  Zadimoghaddam}.} \bibinfo{year}{2011}\natexlab{}.
\newblock \showarticletitle{Online Stochastic Weighted Matching: Improved
  Approximation Algorithms}. In \bibinfo{booktitle}{\emph{{WINE}}}
  \emph{(\bibinfo{series}{Lecture Notes in Computer Science})},
  Vol.~\bibinfo{volume}{7090}. \bibinfo{publisher}{Springer},
  \bibinfo{pages}{170--181}.
\newblock


\bibitem[\protect\citeauthoryear{Huang, Kang, Tang, Wu, Zhang, and Zhu}{Huang
  et~al\mbox{.}}{2018a}]%
        {stoc/HKTWZZ18}
\bibfield{author}{\bibinfo{person}{Zhiyi Huang}, \bibinfo{person}{Ning Kang},
  \bibinfo{person}{Zhihao~Gavin Tang}, \bibinfo{person}{Xiaowei Wu},
  \bibinfo{person}{Yuhao Zhang}, {and} \bibinfo{person}{Xue Zhu}.}
  \bibinfo{year}{2018}\natexlab{a}.
\newblock \showarticletitle{How to match when all vertices arrive online}. In
  \bibinfo{booktitle}{\emph{{STOC}}}. \bibinfo{publisher}{{ACM}},
  \bibinfo{pages}{17--29}.
\newblock


\bibitem[\protect\citeauthoryear{Huang, Tang, Wu, and Zhang}{Huang
  et~al\mbox{.}}{2018b}]%
        {icalp/HTWZ18}
\bibfield{author}{\bibinfo{person}{Zhiyi Huang}, \bibinfo{person}{Zhihao~Gavin
  Tang}, \bibinfo{person}{Xiaowei Wu}, {and} \bibinfo{person}{Yuhao Zhang}.}
  \bibinfo{year}{2018}\natexlab{b}.
\newblock \showarticletitle{Online Vertex-Weighted Bipartite Matching: Beating
  1-1/e with Random Arrivals}. In \bibinfo{booktitle}{\emph{{ICALP}}}
  \emph{(\bibinfo{series}{LIPIcs})}, Vol.~\bibinfo{volume}{107}.
  \bibinfo{publisher}{Schloss Dagstuhl - Leibniz-Zentrum fuer Informatik},
  \bibinfo{pages}{79:1--79:14}.
\newblock


\bibitem[\protect\citeauthoryear{Jaillet and Lu}{Jaillet and Lu}{2014}]%
        {mor/JailletL14}
\bibfield{author}{\bibinfo{person}{Patrick Jaillet} {and} \bibinfo{person}{Xin
  Lu}.} \bibinfo{year}{2014}\natexlab{}.
\newblock \showarticletitle{Online Stochastic Matching: New Algorithms with
  Better Bounds}.
\newblock \bibinfo{journal}{\emph{Math. Oper. Res.}} \bibinfo{volume}{39},
  \bibinfo{number}{3} (\bibinfo{year}{2014}), \bibinfo{pages}{624--646}.
\newblock


\bibitem[\protect\citeauthoryear{Karande, Mehta, and Tripathi}{Karande
  et~al\mbox{.}}{2011}]%
        {stoc/KarandeMT11}
\bibfield{author}{\bibinfo{person}{Chinmay Karande}, \bibinfo{person}{Aranyak
  Mehta}, {and} \bibinfo{person}{Pushkar Tripathi}.}
  \bibinfo{year}{2011}\natexlab{}.
\newblock \showarticletitle{Online bipartite matching with unknown
  distributions}. In \bibinfo{booktitle}{\emph{STOC}}.
  \bibinfo{pages}{587--596}.
\newblock


\bibitem[\protect\citeauthoryear{Karp, Vazirani, and Vazirani}{Karp
  et~al\mbox{.}}{1990}]%
        {stoc/KarpVV90}
\bibfield{author}{\bibinfo{person}{Richard~M. Karp}, \bibinfo{person}{Umesh~V.
  Vazirani}, {and} \bibinfo{person}{Vijay~V. Vazirani}.}
  \bibinfo{year}{1990}\natexlab{}.
\newblock \showarticletitle{An Optimal Algorithm for On-line Bipartite
  Matching}. In \bibinfo{booktitle}{\emph{STOC}}. \bibinfo{pages}{352--358}.
\newblock


\bibitem[\protect\citeauthoryear{Kesselheim, Radke, T{\"{o}}nnis, and
  V{\"{o}}cking}{Kesselheim et~al\mbox{.}}{2013}]%
        {esa/KesselheimRTV13}
\bibfield{author}{\bibinfo{person}{Thomas Kesselheim}, \bibinfo{person}{Klaus
  Radke}, \bibinfo{person}{Andreas T{\"{o}}nnis}, {and}
  \bibinfo{person}{Berthold V{\"{o}}cking}.} \bibinfo{year}{2013}\natexlab{}.
\newblock \showarticletitle{An Optimal Online Algorithm for Weighted Bipartite
  Matching and Extensions to Combinatorial Auctions}. In
  \bibinfo{booktitle}{\emph{{ESA}}} \emph{(\bibinfo{series}{Lecture Notes in
  Computer Science})}, Vol.~\bibinfo{volume}{8125}.
  \bibinfo{publisher}{Springer}, \bibinfo{pages}{589--600}.
\newblock


\bibitem[\protect\citeauthoryear{Mahdian and Yan}{Mahdian and Yan}{2011}]%
        {stoc/MahdianY11}
\bibfield{author}{\bibinfo{person}{Mohammad Mahdian} {and}
  \bibinfo{person}{Qiqi Yan}.} \bibinfo{year}{2011}\natexlab{}.
\newblock \showarticletitle{Online bipartite matching with random arrivals: an
  approach based on strongly factor-revealing {LP}s}. In
  \bibinfo{booktitle}{\emph{STOC}}. \bibinfo{pages}{597--606}.
\newblock


\bibitem[\protect\citeauthoryear{Manshadi, Gharan, and Saberi}{Manshadi
  et~al\mbox{.}}{2012}]%
        {mor/ManshadiGS12}
\bibfield{author}{\bibinfo{person}{Vahideh~H. Manshadi},
  \bibinfo{person}{Shayan~Oveis Gharan}, {and} \bibinfo{person}{Amin Saberi}.}
  \bibinfo{year}{2012}\natexlab{}.
\newblock \showarticletitle{Online Stochastic Matching: Online Actions Based on
  Offline Statistics}.
\newblock \bibinfo{journal}{\emph{Math. Oper. Res.}} \bibinfo{volume}{37},
  \bibinfo{number}{4} (\bibinfo{year}{2012}), \bibinfo{pages}{559--573}.
\newblock


\bibitem[\protect\citeauthoryear{Mehta, Saberi, Vazirani, and Vazirani}{Mehta
  et~al\mbox{.}}{2007}]%
        {jacm/MehtaSVV07}
\bibfield{author}{\bibinfo{person}{Aranyak Mehta}, \bibinfo{person}{Amin
  Saberi}, \bibinfo{person}{Umesh~V. Vazirani}, {and} \bibinfo{person}{Vijay~V.
  Vazirani}.} \bibinfo{year}{2007}\natexlab{}.
\newblock \showarticletitle{AdWords and generalized online matching}.
\newblock \bibinfo{journal}{\emph{J. {ACM}}} \bibinfo{volume}{54},
  \bibinfo{number}{5} (\bibinfo{year}{2007}), \bibinfo{pages}{22}.
\newblock


\bibitem[\protect\citeauthoryear{Wang and Wong}{Wang and Wong}{2015}]%
        {icalp/WangW15}
\bibfield{author}{\bibinfo{person}{Yajun Wang} {and}
  \bibinfo{person}{Sam~Chiu{-}wai Wong}.} \bibinfo{year}{2015}\natexlab{}.
\newblock \showarticletitle{Two-sided Online Bipartite Matching and Vertex
  Cover: Beating the Greedy Algorithm}. In \bibinfo{booktitle}{\emph{{ICALP}
  {(1)}}} \emph{(\bibinfo{series}{Lecture Notes in Computer Science})},
  Vol.~\bibinfo{volume}{9134}. \bibinfo{publisher}{Springer},
  \bibinfo{pages}{1070--1081}.
\newblock


\end{thebibliography}
}

\end{document}